
\documentclass[
reprint,
superscriptaddress,
amsmath,amssymb,
aps,
prx,
floatfix
]{revtex4-1}

\usepackage[colorlinks=true,breaklinks=true,linkcolor=red,citecolor=magenta,urlcolor=magenta]{hyperref}

\usepackage{graphicx}
\usepackage{dcolumn}
\usepackage{bm}
\usepackage{dsfont}
\usepackage{amsthm}
\usepackage{chngcntr}
\usepackage{apptools}
\usepackage{amsmath}
\usepackage{mathtools}
\usepackage{graphicx}
\graphicspath{ {./images/} }
\usepackage{siunitx}
\usepackage[margin=1in,footskip=0.25in]{geometry}
\usepackage{blindtext}
\usepackage{enumitem}
\usepackage{xcolor}
\usepackage[makeroom]{cancel}

\usepackage{placeins}
\AtAppendix{\counterwithin{lemma}{section}}
\usepackage{longtable} 
\usepackage{algorithm}
\usepackage{algpseudocode}

\DeclareMathOperator*{\argmin}{arg\,min}


\usepackage[capitalise]{cleveref}
\crefname{figure}{Fig.}{Fig.}


\newtheorem{theorem}{Theorem}

\newtheorem{lemma}{Lemma}
\theoremstyle{definition}
\newtheorem{definition}{Definition}

\theoremstyle{definition}
\newtheorem{myalgorithm}{Algorithm}

\newtheorem{remark}{Remark}

\usepackage{tikz}
\usetikzlibrary{quantikz2}

\begin{document}

\title{Quantum Krylov Algorithm for Szeg\"o Quadrature}

\author{William~Kirby}
\email{william.kirby@ibm.com}
\affiliation{IBM Quantum, IBM T.J. Watson Research Center, Yorktown Heights, NY 10598, USA}

\author{Yizhi~Shen}
\affiliation{Applied Mathematics and Computational Research Division, Lawrence Berkeley National Laboratory, Berkeley, CA 94720, USA}

\author{Daan~Camps}
\affiliation{National Energy Research Scientific Computing Center, Lawrence Berkeley National Laboratory, Berkeley, CA 94720, USA}

\author{Anirban~Chowdhury}
\affiliation{IBM Quantum, IBM T.J. Watson Research Center, Yorktown Heights, NY 10598, USA}

\author{Katherine~Klymko}
\affiliation{National Energy Research Scientific Computing Center, Lawrence Berkeley National Laboratory, Berkeley, CA 94720, USA}

\author{Roel~Van~Beeumen}
\affiliation{Applied Mathematics and Computational Research Division, Lawrence Berkeley National Laboratory, Berkeley, CA 94720, USA}

\begin{abstract}

We present a quantum algorithm to evaluate matrix elements of functions of unitary operators.
The method is based on calculating quadrature nodes and weights using data collected from a quantum processor.
Given a unitary $U$ and quantum states $|\psi_0\rangle$, $|\psi_1\rangle$, the resulting quadrature rules form a functional that can then be used to classically approximate $\langle\psi_1|f(U)|\psi_0\rangle$ for any function $f$.
In particular, the algorithm calculates Szeg\"o quadrature rules, which, when $f$ is a Laurent polynomial, have the optimal relation between degree of $f$ and number of distinct quantum circuits required.
The unitary operator $U$ could approximate a time evolution, opening the door to applications like estimating properties of Hamiltonian spectra and Gibbs states, but more generally could be any operator implementable via a quantum circuit.
We expect this algorithm to be useful as a subroutine in other quantum algorithms, much like quantum signal processing or the quantum eigenvalue transformation of unitaries.
Key advantages of our algorithm are that it does not require approximating $f$ directly, via a series expansion or in any other way, and once the output functional has been constructed using the quantum algorithm, it can be applied to any $f$ classically after the fact.

\end{abstract}

\maketitle


\section{Introduction}

Within the field of quantum simulation, much effort has been devoted to specific computational physics problems, most notably simulation of time evolution and approximation of low-lying energies.
A cross-cutting algorithmic idea that has proven powerful within this literature is application of functions to operators, developed over a long sequence of papers with varying computational goals (see for example~\cite{poulin2009sampling,poulin2009preparing, hhl2009,childs2010quantumwalk,childs2012lcu,childs2017quantum,chowdhury2017quantum,vanApeldoorn2020quantumsdpsolvers,ge2019faster}).
Current state-of-the-art techniques in this area are quantum signal processing~\cite{low2017signalprocessing, low2019qubitization} and the quantum singular value transformation (QSVT)~\cite{gilyen2019qsvt}, which provide methods for applying functions to Hermitian operators, and more recently the quantum eigenvalue transformation of unitaries (QET-U)~\cite{dong2022groundstate}, which provides a method for applying functions to unitary operators.
It is no exaggeration to state that these techniques have been transformative in how the field thinks about algorithmic possibilities within quantum simulation.

In this work, we expand this lineage of methods by presenting a quantum algorithm to approximate matrix elements of the form $\langle\psi_1|f(U)|\psi_0\rangle$, where $U$ is a unitary operator that we assume can be efficiently implemented by a quantum circuit, $f$ is a function, and $|\psi_0\rangle,|\psi_1\rangle$ are quantum states.
The method uses the quantum computer to estimate matrix elements $U^j$ for a sequence of powers $j$ up to some maximum, then inserts the resulting data into matrices representing the projection of $U$ onto a Krylov space.
Up to this point, the algorithm is the same as quantum Krylov methods for estimating low-lying energies~\cite{mcclean2017subspace,colless2018computation,parrish2019filterdiagonalization,motta2020qite_qlanczos,takeshita2020subspace,huggins2020nonorthogonal,stair2020krylov,urbanek2020chemistry,kyriienko2020quantum,cohn2021filterdiagonalization,yoshioka2021virtualsubspace,epperly2021subspacediagonalization,seki2021powermethod,bespalova2021hamiltonian,baker2021lanczos,baker2021block,cortes2022krylov,klymko2022realtime,jamet2022greens,baek2023nonorthogonal,tkachenko2022davidson,lee2023sampling,zhang2023measurementefficient,kirby2023exactefficient,shen2023realtimekrylov,yang2023dualgse,yang2023shadow,ohkura2023leveraging,motta2023subspace,anderson2024solving,kirby2024analysis,yoshioka2024diagonalization,byrne2024super,yu2025quantum}.

The main idea of our algorithm is to use the entire spectrum of the projected $U$ to construct a \emph{quadrature rule}, i.e., an approximation of the form
\begin{equation}
\label{qsq_output_general}
    \langle\psi_1|f(U)|\psi_0\rangle\approx\sum_{k=0}^{d-1}\omega_kf(\lambda_k),
\end{equation}
where the $\lambda_k$ and $\omega_k$ are a collection of ``nodes'' and ``weights'' (respectively) defining the rule.
We can think of both sides of \eqref{qsq_output_general} as functionals of $f$, the output of the algorithm being the functional on the right-hand side, which is used to approximate the desired functional on the left-hand side.

Quadrature rules have a long history as approximations of integrals in classical numerical linear algebra~\cite{gauss1815methodus,szego1939orthogonal,wilf1962mathematics,jones1989moment}.
In our case, the complex unit circle is the appropriate domain for the integration since all eigenvalues of a unitary operator lie on it.
Optimal quadrature rules for integration on the complex unit circle are called \emph{Szeg\"o quadrature rules}, and these are the type of rule that our algorithm is able to output.
For this reason, we call our algorithm \emph{quantum Szeg\"o quadrature} (QSQ).

It is instructive to note the parallel between QSQ, QSVT~\cite{gilyen2019qsvt}, and QET-U~\cite{dong2022groundstate}, and then spell out the differences.
Like these prior techniques, QSQ provides a method for taking an input operator~--- in our case $U$~--- and transforming it under some matrix function $f$.
Since it transforms a unitary, it is most closely related to QET-U.
There are three main advantages of QSQ compared to prior work:
\begin{enumerate}
    \item Simplified circuits: while QET-U uses interleaved controlled-$U$s and single-qubit rotations, QSQ requires only Hadamard tests, like quantum Krylov algorithms for low-lying energies.
    In the general case, there is little practical difference between these two types of circuit, since a single controlled long-time evolution has similar cost to many controlled short-time evolutions combining to the same time.
    However, if certain symmetries are present then the Hadamard test can be simplified to replace the controlled unitary with an uncontrolled unitary~\cite{yoshioka2024diagonalization}.
    See \cref{fig:circ_comp} for a comparison of the circuit constructions (with credit to~\cite{dong2022groundstate} for a very similar figure).
    The comparison with QSVT is similar.
    \item QSQ does not require calculating a series approximation of $f$ on the way to the final approximation.
    Although the accuracy of our approximation can be related to certain series expansions, as we show in \cref{algorithm}, it is unnecessary to explicitly obtain those expansions.
    \item As noted above, once we have collected the data from the quantum part of the QSQ algorithm, we can approximate \eqref{qsq_output_general} up to the fixed degree $d$ for \emph{any} $f$ by only changing the classical post-processing.
    This is a consequence of the fact that the output of QSQ is a functional in the form of a quadrature rule.
    The accuracy of the approximation may vary for different $f$, but taking new quantum data for a new $f$ is not necessary.
\end{enumerate}

\begin{figure*}[t!]
    \centering
    \begin{quantikz}
        \lstick{$|0\rangle$} & \gate{H} & \ctrl{1} & \meter{X\text{ or }Y} \\
        \lstick{$|\psi_0\rangle$} & \qw & \gate{U^j} & \qw
    \end{quantikz}~\\
    (a)~\\
    \begin{quantikz}
        \lstick{$|0\rangle$} & \gate{e^{i\phi_0X}} & \ctrl{1} & \gate{e^{i\phi_1X}} & \ctrl{1} & \qw & \hspace{-0.15in}\cdots\hspace{0.04in} & \ctrl{1} & \gate{e^{i\phi_0X}} & \meter{X\text{ or }Y} \\
        \lstick{$|\psi_0\rangle$} & \qw & \gate{U} & \qw & \gate{U^\dagger} & \qw & \hspace{-0.15in}\cdots\hspace{0.04in} & \gate{U} & \qw & \qw
    \end{quantikz}~\\
    (b)
    \caption{Comparison of (a) Hadamard test circuit for QSQ and (b) circuit for QET-U~\cite{dong2022groundstate}.}
    \label{fig:circ_comp}
\end{figure*}
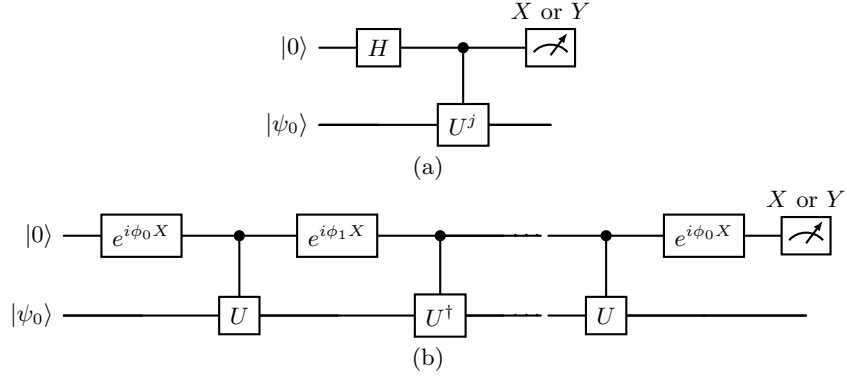

On the other hand, QET-U coherently constructs $f(U)|\psi_0\rangle$ in a quantum register (similarly $f(H)$ for QSVT), whereas QSQ approximates the functional as in \eqref{qsq_output_general} but does not explicitly construct the state $f(U)|\psi_0\rangle$.
In practice, many quantities of interest, such as expectation values and correlation functions, can be expressed in the form \eqref{qsq_output_general}.
But explicitly preparing the state $f(U)|\psi_0\rangle$ as in QET-U does provide the maximum flexibility in what to do with it.

This paper is organized as follows.
We begin by reviewing background on quadrature rules in \cref{quad_rules_background}.
In \cref{algorithm}, we present our main theoretical results: construction of the QSQ algorithm and proof that it yields a Szeg\"o quadrature rule.
In \cref{stable_alg}, we discuss the impact of noise and present a numerically-stable and noise-resilient implementation of our algorithm.
We present numerical demonstrations of our algorithm in \cref{numerics}.
Finally, we conclude with some context and discussion in \cref{discussion}.

\section{Background: quadrature rules}
\label{quad_rules_background}

As noted in the introduction, a quadrature rule $R$ is defined by a set of nodes $\lambda_k$ and weights $\omega_k$, and is used to approximate an integral $I$ by evaluating the integrand at the nodes and returning the weighted sum of the results:
\begin{equation}
    I(f)=\int_\mathcal{R}f(z)d\mu(z)\approx R(f)=\sum_{k=0}^{d-1}\omega_kf(\lambda_k).
    \label{quad_rule_gen}
\end{equation}
The nodes $\lambda_k$ and weights $\omega_k$ are determined by the range $\mathcal{R}$ and measure $\mu$ of the integral.
For example, a Riemann sum may be considered as a simple quadrature rule: an integral on the real line is approximated by partitioning the line into equal-sized intervals, evaluating the integrand at some points in the intervals, and weighting those values by the lengths of the intervals.
However, when one does not intend to take the number of nodes to infinity, a Riemann sum is typically not the optimal choice for approximation.
A better choice on the real line is \emph{Gauss quadrature}~\cite{gauss1815methodus}, which yields a prescribed set of nodes and weights for any measure (that is not too pathological) on an interval in the real line.
A degree-$d$ Gauss quadrature rule (i.e., having $d$ nodes and weights) is defined by being exact for polynomials up to degree $2d-1$, which is optimal~\cite{gauss1815methodus,wilf1962mathematics}.

\emph{Szeg\"o quadrature}~\cite{szego1939orthogonal,jones1989moment} (sometimes called \emph{Gauss-Szeg\"o quadrature}) is the cousin of Gauss quadrature for integration on the complex unit circle.
Instead of being exact for degree-$(2d-1)$ polynomials, it is exact for degree-$(d-1)$ \emph{Laurent polynomials}, i.e. functions of the form
\begin{equation}
\label{laurent_poly}
    f(z)=\sum_{j=-d+1}^{d-1}\alpha_jz^j.
\end{equation}
Note that, like a degree-$(2d-1)$ polynomial, this function contains $2d$ terms that are integer powers of the argument.
Hence it should perhaps be unsurprising that a Szeg\"o rule is optimal on the complex unit circle in the same sense as a Gauss rule on the real line, just for Laurent polynomials~\cite{jones1989moment}.

\section{Quantum Szeg\"o Quadrature Algorithm}
\label{algorithm}

\subsection{Construction}

In this section, we present the formal version of quantum Szeg\"o quadrature (QSQ) and give a proof of its optimality.
For this purpose, rather than \eqref{qsq_output_general}, we focus on approximating the simpler functional
\begin{equation}
\label{qsq_output}
    \langle\psi_0|f(U)|\psi_0\rangle.
\end{equation}
Linear combinations of the above with different states $|\psi_0\rangle$ can be used to construct the general functional \eqref{qsq_output_general}, as follows: let $a,b,c,d$ correspond to evaluating \eqref{qsq_output} with $|\psi_0\rangle$ replaced by $\frac{|\psi_0\rangle+|\psi_1\rangle}{\sqrt{2}}$, $\frac{|\psi_0\rangle-|\psi_1\rangle}{\sqrt{2}}$, $\frac{|\psi_0\rangle+i|\psi_1\rangle}{\sqrt{2}}$, and $\frac{|\psi_0\rangle-i|\psi_1\rangle}{\sqrt{2}}$, respectively.
Some algebra shows that
\begin{equation}
\label{gen_decomp}
    \frac{a-b + i(c-d)}{2} = \langle\psi_1|f(U)|\psi_0\rangle,
\end{equation}
the desired form \eqref{qsq_output_general}.

Given this construction, we can focus on the simpler form \eqref{qsq_output}.
Hence for the general form \eqref{gen_decomp} we would separately construct quadrature rules for $a$, $b$, $c$, and $d$, then form the linear combination of them given in \eqref{gen_decomp}.
Since each of the quadrature rules is itself linear in $f$, we can think of the linear combination \eqref{gen_decomp} of them as one big quadrature rule, justifying the form \eqref{qsq_output_general}.

We begin by defining a key subroutine: the \emph{quantum isometric Arnoldi's method}, a slight modification of the unitary quantum Krylov algorithm (also known as unitary variational quantum phase estimation~\cite{klymko2022realtime}).
Note that the following is an idealized version of the algorithm that is numerically unstable; the stable version is discussed in \cref{stable_alg}.

\begin{widetext}
\begin{myalgorithm}[``Quantum isometric Arnoldi's method"]~
\label{quantum_iso_arnoldi}
\begin{algorithmic}[1]

    \State\textbf{input} $n$-qubit unitary operator $U$, Krylov dimension $d$, initial state $|\psi_0\rangle$

    \State $\textbf{X}_{0}\gets1$

    \For{$j=1,2,\ldots,d$} \Comment{on quantum (Krylov circuits)}

    \State Calculate $\textbf{X}_{j}=\langle\psi_0|U^{j}|\psi_0\rangle\qquad$

    \State $\textbf{X}_{-j}\gets\overline{\textbf{X}_{j}}$

    \EndFor

    \For{$i,j=0,1,2,\ldots,d-1$} \Comment{build Krylov matrices $(\textbf{U},\textbf{S})$}

    \State $\textbf{U}_{ij}\gets\textbf{X}_{j-i+1}$

    \State $\textbf{S}_{ij}\gets\textbf{X}_{j-i}$

    \EndFor

    \State $\widetilde{\textbf{U}} \gets$ Gram-Schmidt orthonormalization of $\textbf{U}$ for Gram matrix $\textbf{S}$

    \State $\widetilde{\textbf{U}}_{,d-1} \gets \widetilde{\textbf{U}}_{,d-1}/\|\widetilde{\textbf{U}}_{,d-1}\|\qquad$ \Comment{$\widetilde{\textbf{U}}_{,d-1}=$ last column of $\widetilde{\textbf{U}}$}

    \State\textbf{return} $\widetilde{\textbf{U}}$
    
\end{algorithmic}
\end{myalgorithm}
\end{widetext}

This algorithm mimics the isometric variant of the classical Arnoldi's method~\cite{gragg1993positive}.
Prior to normalizing its last column in line 12 of \cref{quantum_iso_arnoldi} above, $\widetilde{\textbf{U}}$ is equal to the matrix that would be obtained by implementing Arnoldi's method starting from $|\psi_0\rangle$ with the iterator $U$.
This follows because by construction $\textbf{U}_{ij}=\langle\psi_i|U|\psi_j\rangle$ for $\psi_i\coloneqq U^i|\psi_0\rangle$, and $\textbf{S}_{ij}=\langle\psi_i|\psi_j\rangle$ is the Gram matrix of those basis vectors.
Gram-Schmidt orthonormalizing this basis using the inner products in $\textbf{S}$ entails orthogonalizing each $|\psi_i\rangle$ with respect to the $|\psi_j\rangle$ for $j=0,\ldots,i-1$, then normalizing, which is exactly what would have occurred if the $|\psi_i\rangle$ were generated using the Arnoldi iteration.

It follows that $\widetilde{\textbf{U}}$ is upper-Hessenberg, just like the matrix resulting from the classical Arnoldi method.
Normalizing the last column yields the so-called isometric Arnoldi method, where ``isometric'' refers to the fact that $\widetilde{\textbf{U}}$ is now also unitary, as shown in~\cite{gragg1993positive}.
In the limit of large $d$, the spectrum of $\widetilde{\textbf{U}}$ converges to that of $U$ unless breakdown occurs in the iteration: by fixing $\widetilde{\textbf{U}}$ itself to be unitary, we guarantee that the intermediate approximate spectra also lie on the complex unit circle.

As a quantum algorithm, the construction described above is preferable to orthonormalizing coherently on the quantum computer.
In a completely faithful reproduction of the Arnoldi iteration, one would ideally orthonormalize coherently, but that leads to much more complex quantum circuits~\cite{zhang2021quantum}.
However, implementing the orthonormalization in classical post-processing as we do introduces the potential for instability due to ill-conditioning of the Gram matrix $\textbf{S}$.
For the moment we ignore this and treat $\textbf{S}$ as if it is well-conditioned.

Unlike quantum Krylov algorithms for low energy approximation, we will not simply use the extremal (on the complex unit circle) eigenvalues of $\widetilde{\textbf{U}}$ to approximate the extremal eigenvalues of $U$.
Instead, we use the full spectrum of $\widetilde{\textbf{U}}^T$, as well as its eigenvectors, to approximate quantities of the form \eqref{qsq_output}.

We will begin by casting our simplified functional \eqref{qsq_output} as an integral.
Let $[N]\coloneqq\{0,1,2,\ldots,N-1\}$, and let
\begin{equation}
    |\psi_0\rangle=\sum_{j\in[N]}\gamma_j|\zeta_j\rangle ,
\end{equation}
be the decomposition of $|\psi_0\rangle$ in the eigenbasis of $\{|\zeta_j\rangle\}$ of $U$, for eigenvalues $\zeta_j$ in weakly-increasing order of phase in $\arg(\zeta_j)\in[-\pi,\pi)$.
If we wish to express $U$ as generated by a Hamiltonian $H$ over a time-interval $\Delta t$, then we have the correspondence
\begin{equation}
    \arg(\zeta_j)=-E_j\Delta t\mod[-\pi,\pi) ,
\end{equation}
between eigenenergies $E_j$ of $H$ and eigenvalues $\zeta_j$ of $U$, where $x\mod[-\pi,\pi)$ means $x+2\pi k$ for the integer $k$ such that $x+2\pi k\in[-\pi,\pi)$.
However, for the sake of generality we will primarily focus on $U$ and its eigendecomposition.
Note that we impose ordering on the $\zeta_j$ as described above, which does not necessarily correspond to ordering of the $E_j$; that will also be guaranteed if and only if $\Delta t$ is small enough that $E_j\Delta t\in[-\pi,\pi)$ for all $j$, which amounts to the constraint
\begin{equation}
    \Delta t\le\frac{\pi}{\|H\|}.
\end{equation}

Let $\mathds{T}$ denote the contour counterclockwise around the complex unit circle, and for each $j=0,1,2,\ldots,N-1$ let $\mathds{T}_j\subset\mathds{T}$ denote the arc within $\mathds{T}$ that lies between $\zeta_{j-1}$ and $\zeta_j$ (for the cases $j=0$ and $N-1$, we define $\zeta_{-1}=\zeta_{N-1}=-1$).
Using these definitions, we can express our functional \eqref{qsq_output} as an integral:
\begin{equation}
\label{riemann_stieltjes_form}
\begin{split}
    \langle\psi_0|f(U)|\psi_0\rangle&=\sum_{j\in[N]}|\gamma_j|^2f(\zeta_j)\\
    &=I(f)\coloneqq\int_\mathds{T}f(z)d\mu(z),
\end{split}
\end{equation}
where $\mathds{T}$ denotes the contour counterclockwise around the complex unit circle, and $I$ is a Riemann-Stieltjes integral whose measure is
\begin{equation}
\label{linear_mu}
    \mu(z)=\sum_{k\in[j]}|\gamma_k|^2\qquad\text{if}\quad z\in\mathds{T}_j.
\end{equation}
In words, if we begin at $z=-1\in\mathds{T}$, $\mu(z)$ is stepwise increasing around $\mathds{T}$, starting from $0$ and increasing to $\mu(z)=\sum_{k=0}^{N-1}|\gamma_k|^2=1$ when $z\in\mathds{T}_N$.
Hence
\begin{equation}
    d\mu(z)=\sum_{j\in[N]}|\gamma_j|^2\delta(z-\zeta_j),
\end{equation}
which explains why \eqref{riemann_stieltjes_form} holds.

The main idea of QSQ is to approximate $I$ using Szeg\"o quadrature with nodes $\lambda_k$ and weights $\omega_k$ computed using the quantum isometric Arnoldi method.
As a reminder, a quadrature rule $R(f)$ for an integral $I(f)$ is a set of pairs $(\lambda_k,\omega_k)$ such that
\begin{equation}
\label{szego_quadrature}
    I(f)\approx R(f)\coloneqq\sum_{k\in[d]}\omega_kf(\lambda_k).
\end{equation}
Szeg\"o quadrature refers to the case when the approximation \eqref{szego_quadrature} is exact for any Laurent polynomial $f$ of degree up to $d-1$, i.e., $f$ satisfies \eqref{laurent_poly} for some $\alpha_j$.
This is the optimal degree for which the expression \eqref{szego_quadrature} can be exact for a given number $d$ of node-weight pairs.
We prove the following theorem in \cref{quantum_iso_arnoldi_appendix}, employing results from~\cite{jones1989moment,gragg1993positive}:

\begin{theorem}
\label{main_thm}
    The nodes $\lambda_k$ of the Szeg\"o quadrature rule \eqref{szego_quadrature} are the eigenvalues of $\widetilde{\textbf{U}}^T$.
    The corresponding weights $\omega_k$ are given by
    \begin{equation}
    \label{weights_formula}
        \omega_k=|v_{k,0}|^2,
    \end{equation}
    where $v_{k,0}$ is the zeroth entry of the normalized eigenvector $v_k$ of $\widetilde{\textbf{U}}^T$ with eigenvalue $\lambda_k$.
\end{theorem}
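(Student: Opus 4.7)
The plan is to reduce the theorem to the classical theory of Szegő quadrature via para-orthogonal polynomials (POPs), exploiting the correspondence of Gragg between isometric Hessenberg matrices and the truncated Szegő basis. The starting observation is that the moments $\textbf{X}_j=\langle\psi_0|U^j|\psi_0\rangle$ collected by the quantum subroutine equal the moments $\int_{\mathds{T}} z^j\,d\mu(z)$ of the measure $\mu$ defined in \eqref{linear_mu}, so the Gram matrix $\textbf{S}$ is the Gram matrix of the polynomial basis $\{1,z,\ldots,z^{d-1}\}$ in $L^2(\mathds{T},\mu)$. Therefore the Gram–Schmidt step in line 11 of \cref{quantum_iso_arnoldi} is the matrix-level image of orthonormalizing $\{1,z,\ldots,z^{d-1}\}$ in $L^2(\mathds{T},\mu)$, producing the orthonormal Szegő polynomials $\phi_0,\ldots,\phi_{d-1}$ of $\mu$, with $\phi_0\equiv 1$ since $\mu$ is a probability measure.

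For the nodes, I would argue that, prior to the last-column rescaling, $\widetilde{\textbf{U}}$ is the matrix of multiplication-by-$z$ in the $\phi_j$ basis truncated to span $\{\phi_0,\ldots,\phi_{d-1}\}$, hence upper Hessenberg. The last-column normalization in line 12 implements exactly the boundary modification that turns this Hessenberg truncation into a unitary (isometric) Hessenberg matrix; by \cite{gragg1993positive}, its characteristic polynomial is a para-orthogonal polynomial of degree $d$ associated with $\mu$. The classical theorem of Jones–Njåstad–Thron~\cite{jones1989moment} then identifies the zeros of such POPs as the nodes of the $d$-point Szegő quadrature rule for $\mu$. Since the spectrum of a matrix equals that of its transpose, the nodes $\lambda_k$ coincide with the eigenvalues of $\widetilde{\textbf{U}}^T$.

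For the weights, the classical Szegő analogue of the Christoffel number formula gives
\begin{equation*}
    \omega_k=\frac{1}{K_d(\lambda_k,\lambda_k)},\qquad K_d(z,w)\coloneqq\sum_{j=0}^{d-1}\overline{\phi_j(w)}\phi_j(z).
\end{equation*}
To connect this with the stated formula, I would use the well-known recurrence for Szegő polynomials (encoded in the Hessenberg structure of $\widetilde{\textbf{U}}$) to show that the vector with $j$-th component $\overline{\phi_j(\lambda_k)}$ is a left eigenvector of $\widetilde{\textbf{U}}$ at eigenvalue $\lambda_k$, hence a right eigenvector of $\widetilde{\textbf{U}}^T$. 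Normalizing it yields $|v_{k,j}|^2=|\phi_j(\lambda_k)|^2/K_d(\lambda_k,\lambda_k)$, and using $\phi_0\equiv 1$ immediately gives $|v_{k,0}|^2=1/K_d(\lambda_k,\lambda_k)=\omega_k$, which is \eqref{weights_formula}.

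The main obstacle is step two: carefully justifying that the specific last-column rescaling in line 12 produces the correct POP, i.e., the unique para-orthogonal polynomial whose free unit-modulus parameter matches the algorithm's output and whose roots therefore give Szegő nodes. This is a book-keeping exercise that ultimately reduces to the calculation in \cite{gragg1993positive}, but the mapping between the classical Schur-parameter presentation and the operational Gram–Schmidt step in \cref{quantum_iso_arnoldi} must be spelled out. A secondary subtlety is the transpose/conjugation bookkeeping when identifying the right eigenvectors of $\widetilde{\textbf{U}}^T$ with Szegő polynomial values, since the $L^2(\mathds{T},\mu)$ inner product is conjugate-linear in the first argument; once the transpose is taken, the natural objects become evaluations $\phi_j(\lambda_k)$ rather than their conjugates, and the $j=0$ entry cleanly isolates the Christoffel number.
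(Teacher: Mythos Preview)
Your proposal is correct and follows essentially the same route as the paper: both arguments identify the nodes by combining Gragg's result that the isometric Hessenberg matrix has the para-orthogonal polynomial as its characteristic polynomial with the Jones--Nj\aa stad--Thron characterization of Szeg\H{o} nodes, and both obtain the weights from $\omega_k=1/\sum_{j=0}^{d-1}|\phi_j(\lambda_k)|^2$ together with the fact that $(\phi_0(\lambda_k),\ldots,\phi_{d-1}(\lambda_k))$ is the (unnormalized) eigenvector of $\widetilde{\textbf{U}}^T$ at $\lambda_k$ and $\phi_0\equiv 1$. The only difference is presentational: you invoke the Christoffel-number formula for Szeg\H{o} quadrature as a known fact, whereas the paper re-derives it by evaluating the unit-circle Christoffel--Darboux identity at $x=y=\lambda_k$ and combining with the integral representation of $\omega_k$ from~\cite{jones1989moment}.
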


Combining \eqref{riemann_stieltjes_form} and \eqref{szego_quadrature}, the above shows that from the eigensystem of $\widetilde{\textbf{U}}^T$ we can obtain $(\lambda_k,\omega_k)$ such that
\begin{equation}
\label{quantum_szego_rule}
    \langle\psi_0|f(U)|\psi_0\rangle=\sum_{k\in[d]}\omega_kf(\lambda_k) , 
\end{equation}
for any Laurent polynomial $f$ of degree up to $d-1$.
In other words, $(\lambda_k,\omega_k)$ is a Szeg\"o quadrature rule.
An illustration of this is given in \cref{fig:random_laurent_poly}, the details of which are discussed in \cref{numerics}.
Note that
\begin{equation}
\label{weight_normalization}
    \sum_{k\in[d]}\omega_k=1,
\end{equation}
since this sum is equal to the norm of the first row of the matrix of eigenvectors of $\widetilde{\textbf{U}}^T$.

\subsection{Beyond Laurent polynomials}
\label{beyond_laurent}

The above result on its own may not be too surprising, since in assembling $\textbf{U},\textbf{S}$ in \cref{quantum_iso_arnoldi}, we would have to obtain $\langle\psi_0|U^j|\psi_0\rangle$ for all $j=0,1,2,\ldots,d-1$.
Since those form the terms in
\begin{equation}
\label{direct_Laurent_series}
    \langle\psi_0|f(U)|\psi_0\rangle=\sum_{j=-d+1}^{d-1}\alpha_j\langle\psi_0|U^j|\psi_0\rangle
\end{equation}
for non-negative $j$, and those terms with negative $j$ are the complex conjugates of their positive-$j$ counterparts, we could evaluate the above expression directly using the entries in $\textbf{S}$.

The real power of the quantum Szeg\"o quadrature rule \eqref{quantum_szego_rule} comes when we use it to approximate $\langle\psi_0|f(U)|\psi_0\rangle$ for $f$ that is not a Laurent polynomial of degree $d-1$.
In this case the quadrature rule will no longer be exact, but can still yield an approximation whose accuracy improves as $d$ is increased.
This may be partly understood if $f$ is well-approximated by a Laurent polynomial: in that case, the quadrature rule will capture the Laurent polynomial component up to degree $d-1$ exactly, leaving only a residual to be approximated.
Since this holds for any Laurent polynomial $p$ up to degree $d-1$ approximating $f$, the error from QSQ will be determined by the least error from any such $p$, and unlike methods that explicitly construct such polynomial approximations, QSQ does not require knowing what the polynomial is.

To make this more precise, let $p^*$ be the optimal approximation of $f$ by a Laurent series of degree $d-1$.
We need to carefully define what we mean by optimal.
For example, a definition that ``knows'' about the exact form of $I$, like
\begin{equation}
    \argmin_{p\in\mathds{L}_{d-1}}|I(f)-I(p)| , 
\end{equation}
where $I(\cdot)$ is the integral \eqref{riemann_stieltjes_form} and $\mathds{L}_{d-1}$ denotes the set of degree-$(d-1)$ Laurent polynomials, is trivial since  as long as $I(p)\neq0$ we could obtain ${|I(f)-I(p)|=0}$ just by rescaling $p$ by some complex scalar.
Instead, we should consider a problem-agnostic optimality condition, in other words
\begin{equation}
    p^*=\argmin_{p\in\mathds{L}_{d-1}}\,\max_{z\in\mathds{T}}|f(z)-p(z)|,
\end{equation}
the minmax error over the complex unit circle $\mathds{T}$ (which contains the unknown spectrum of $U$).
Let $r$ be the residual
\begin{equation}
\label{r_def}
    r\coloneqq f-p^*.
\end{equation}

Let $R$ denote the quadrature rule (of degree $d$) as in \eqref{szego_quadrature}.
The error obtained from approximating $I(f)$ by $I(p^*)$ using the data in $\textbf{S}$ is
\begin{equation}
\label{laurent_error_theory}
    |I(p^*) - I(f)| = |I(r)|.
\end{equation}
Note that we obtain the same result with ${|R(p^*) - I(f)|}$, given that $R(p^*)=I(p^*)$ since $p^*$ is a Laurent polynomial of degree $d-1$.
The best upper bound for this error that is agnostic to $I$ is, by \eqref{riemann_stieltjes_form},
\begin{equation}
\begin{split}
    |I(r)|&=\left|\sum_{j\in[N]}|\gamma_j|^2r(\zeta_j)\right|\\
    &\le\sum_{j\in[N]}|\gamma_j|^2\epsilon=\epsilon,
\end{split}
\end{equation}
where, by \eqref{r_def},
\begin{equation}
    \epsilon\coloneqq\max_{z\in\mathds{T}}|f(z)-p^*(z)|=\min_{p\in\mathds{L}_{d-1}}\,\max_{z\in\mathds{T}}|f(z)-p(z)|.
\end{equation}

On the other hand, the error of the quadrature rule itself is
\begin{equation}
\label{rule_error_theory}
    |R(f) - I(f)| = |R(r) - I(r)|,
\end{equation}
again using $R(p^*)=I(p^*)$.
The best upper bound for this error that is agnostic to $I$ is
\begin{equation}
\begin{split}
    |R(r) - I(r)|&\le|R(r)| + |I(r)|\\
    &\le\left|\sum_{k\in[d]}\omega_kr(\lambda_k)\right| + \epsilon\\
    &\le\sum_{k\in[d]}\omega_k\epsilon + \epsilon = 2\epsilon,
\end{split}
\end{equation}
where the last step follows by \eqref{weight_normalization}.
In other words, if there exists a degree-$d$ Laurent polynomial approximation of $f$ whose error can be bounded by $\epsilon$, then QSQ with Krylov dimension $d$ will have error at most $2\epsilon$.

These are the optimal bounds one could calculate without knowing the spectral decomposition of $U$, and without making any further assumptions.
In practice for a specific $U$ and $|\psi_0\rangle$, the performance of either the optimal polynomial $p^*$ \eqref{laurent_error_theory} or the quadrature rule \eqref{rule_error_theory} could be better than the bound~--- either one could be zero by luck, and this occurrence would have no bearing on the performance of the other.
However, if the quadrature rule still captures some approximation of $I$ for the residual $r$, then the QSQ error \eqref{rule_error_theory} can outperform that of the optimal polynomial: we show an example of this in \cref{numerics}.
On the other hand, if the error $|I(r)|$ of the optimal polynomial is significantly smaller than $\epsilon$, then that really is by happenstance alone.

\section{Noise and regularization}
\label{stable_alg}

In order to develop an understanding of the performance of QSQ in practice, we must first discuss the impact of noise.
When the quantum subroutine (line 4) in \cref{quantum_iso_arnoldi} is evaluated on a quantum computer, the resulting matrices $(\textbf{U}', \textbf{S}')$ will have some errors with respect to their ideal counterparts $(\textbf{U}, \textbf{S})$:
\begin{equation}
    (\textbf{U}', \textbf{S}') = (\textbf{U}+\Delta\textbf{U}, \textbf{S}+\Delta\textbf{S}).
\end{equation}
At a minimum, the Heisenberg limit specifies that for a total runtime of $T$, the matrix elements must have errors scaling as $1/T$; if repeated sampling is used to estimate the matrix elements then instead the errors will scale as $1/\sqrt{T}$.
A full end-to-end analysis of the impact of those errors on the approximation of \eqref{qsq_output}, \emph{a la}~\cite{epperly2021subspacediagonalization, kirby2024analysis}, is beyond the scope of the present work, but we will at least discuss how the algorithm itself should be modified in the presence of noise and give numerical illustrations.

First, as discussed in detail in~\cite{kirby2024analysis}, the condition number of the ideal Gram matrix $\textbf{S}$ grows exponentially with $d$.
Since the maximum eigenvalue of $\textbf{S}$ grows linearly with $d$, this ill-conditioning is due to exponential decay of least eigenvalue with $d$.
In the presence of noise, this means that $\textbf{S}'$ may not even be positive semidefinite, so we must regularize $\textbf{S}'$ in order to use it.
In our case, the Gram-Schmidt orthonormalization in line 11 of \cref{quantum_iso_arnoldi} will fail if $\textbf{S}'$ is ill-conditioned.

When applying the quantum Krylov algorithm to the approximation of low-lying energies, the generalized eigenvalue problem is regularized by so-called \emph{eigenvalue thresholding}, i.e., projecting out eigenspaces of $\textbf{S}$ whose eigenvalues are below some specified positive threshold~\cite{epperly2021subspacediagonalization,klymko2022realtime,kirby2023exactefficient,kirby2024analysis,yoshioka2024diagonalization}.
In our case, eigenvalue thresholding turns out to fail.
The reason is that QSQ uses all eigenvalues and eigenvectors of $\widetilde{\textbf{U}}^T$ (see \cref{main_thm}), so while the analyses~\cite{epperly2021subspacediagonalization,kirby2024analysis} for low energies rely on the fact that projecting out some of the dimensions in the Krylov space turns out not to perturb the lowest-energy vector in that space too badly, in our case projecting out dimensions decreases the number of nodes and weights that we end up with.
Hence it cannot yield a Szeg\"o quadrature rule, and in practice it turns out to badly fail at approximating \eqref{qsq_output} at all.

Instead, we employ a variant of Tikhonov regularization, which has also been used in the context of quantum Krylov low-lying energy approximation~\cite{zhang2023measurementefficient}.
In other words, for some $\eta>0$ and $\lambda_\text{min}(\textbf{S}')$ the smallest eigenvalue of $\textbf{S}'$, we replace
\begin{equation}
    \textbf{S}'\rightarrow\widetilde{\textbf{S}}\coloneqq
    \begin{cases}
        \textbf{S}'-\lambda_\text{min}(\textbf{S}')+\eta\qquad&\text{if }\lambda_\text{min}(\textbf{S}')<\eta,\\
        \textbf{S}'\qquad&\text{if }\lambda_\text{min}(\textbf{S}')\ge\eta,
    \end{cases}
\end{equation}
which has smallest eigenvalue at least $\eta$, so is positive semidefinite by construction.
$\textbf{U}'$ is left unchanged, unlike in eigenvalue thresholding.

A second problem to resolve is the fact that while in the ideal case normalizing the last column of $\widetilde{\textbf{U}}$ makes it unitary, that is no longer sufficient once noise is included.
A convenient solution is to instead replace $\widetilde{\textbf{U}}$ by the closest unitary to it.
This is obtained by calculating the singular value decomposition $\widetilde{\textbf{U}}=PDQ^\dagger$ for unitary $P, Q$ and positive semidefinite diagonal $D$.
The closest unitary matrix to $\widetilde{\textbf{U}}$ in both Frobenius and $\ell_2$ norm distance is then $PQ^\dagger$~\cite{kahan2011nearest}.
Even better, we can show that in the limit of zero noise, this is equivalent to normalizing the last column of $\widetilde{\textbf{U}}$:
\begin{lemma}
    Let $\widetilde{\textbf{U}}$ be upper-Hessenberg with orthogonal columns that are all normalized except for the last, $\widetilde{\textbf{U}}_{,d-1}$.
    The closest unitary $PQ^\dagger$, obtained from the singular value decomposition $\widetilde{\textbf{U}}=PDQ^\dagger$, is identical to the matrix obtained from $\widetilde{\textbf{U}}$ by normalizing its last column, $\widetilde{\textbf{U}}_{,d-1}$.
\end{lemma}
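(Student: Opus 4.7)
The plan is to bypass direct manipulation of the SVD factors $P$ and $Q$, which are highly non-unique here (since $d-1$ of the singular values equal $1$), and instead identify $PQ^\dagger$ with the unique unitary factor of the polar decomposition of $\widetilde{\textbf{U}}$.

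First I would compute $\widetilde{\textbf{U}}^\dagger\widetilde{\textbf{U}}$. Since the columns of $\widetilde{\textbf{U}}$ are mutually orthogonal, this product is diagonal, with diagonal entries equal to the squared column norms. By hypothesis, the first $d-1$ columns are unit vectors and the last has some norm $s \coloneqq \|\widetilde{\textbf{U}}_{,d-1}\|$, so
\begin{equation}
\widetilde{\textbf{U}}^\dagger\widetilde{\textbf{U}} = \mathrm{diag}(1,\ldots,1,s^2),
\end{equation}
whose unique positive semidefinite square root is $H \coloneqq \mathrm{diag}(1,\ldots,1,s)$.

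Next I would invoke the standard identity $PQ^\dagger = \widetilde{\textbf{U}}\,H^{-1}$, which holds for any SVD $\widetilde{\textbf{U}}=PDQ^\dagger$: from $\widetilde{\textbf{U}}^\dagger\widetilde{\textbf{U}}=QD^2Q^\dagger$ one gets $H=QDQ^\dagger$, and therefore
\begin{equation}
\widetilde{\textbf{U}}\,H^{-1} \;=\; PDQ^\dagger\cdot QD^{-1}Q^\dagger \;=\; PQ^\dagger.
\end{equation}
Since $H^{-1}=\mathrm{diag}(1,\ldots,1,1/s)$, right-multiplication by $H^{-1}$ leaves the first $d-1$ columns of $\widetilde{\textbf{U}}$ unchanged and divides the last column by $s$, which is exactly the operation of normalizing the last column. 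The upper-Hessenberg hypothesis plays no essential role beyond its role in the broader algorithm; orthogonality of the columns is what matters.

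The only caveat is that $s>0$ is needed so that $H$ is invertible; this holds whenever the underlying isometric Arnoldi iteration has not broken down by step $d$, the same assumption implicit in the idealized algorithm. The main subtlety, rather than a genuine obstacle, is recognizing that one should \emph{not} try to read $P$ and $Q$ off separately from any chosen SVD: the $(d-1)$-fold degeneracy of the singular value $1$ makes them individually non-unique, but the polar factor $PQ^\dagger$ is uniquely determined by $\widetilde{\textbf{U}}$ alone, which is what makes the short calculation above work.
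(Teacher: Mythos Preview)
Your proof is correct. Both you and the paper begin by observing that $\widetilde{\textbf{U}}^\dagger\widetilde{\textbf{U}}=\mathrm{diag}(1,\ldots,1,s^2)$, but you then diverge: the paper argues directly that $D$ has a single non-unity entry $s$ and asserts that normalizing the last column is ``equivalent to replacing that entry with $1$,'' i.e.\ sending $PDQ^\dagger\to PIQ^\dagger$. You instead pass through the polar decomposition, identifying $PQ^\dagger=\widetilde{\textbf{U}}\,H^{-1}$ with $H=(\widetilde{\textbf{U}}^\dagger\widetilde{\textbf{U}})^{1/2}$ computed explicitly. Your route is a bit longer but more self-contained: because $d-1$ singular values coincide, the individual factors $P,D,Q$ are not uniquely determined, and the paper's phrasing leans on an implicit choice of SVD (essentially $Q=I$) without saying so. By working with the polar factor, which \emph{is} unique once $s>0$, you sidestep that ambiguity entirely. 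Your remark that the upper-Hessenberg hypothesis is inessential to the lemma itself is also correct.
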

\begin{proof}
    The diagonal entries of $D$ are the singular values of $\widetilde{\textbf{U}}$, i.e., the square-roots of the eigenvalues of $\widetilde{\textbf{U}}^\dagger\widetilde{\textbf{U}}$.
    Since the columns of $\widetilde{\textbf{U}}$ are orthogonal, $\widetilde{\textbf{U}}^\dagger\widetilde{\textbf{U}}$ is diagonal with $i$th diagonal entry $\|\widetilde{\textbf{U}}_{,i}\|^2$, i.e., the length of the $i$th column.
    These are all $1$ except for the last, $\|\widetilde{\textbf{U}}_{,d-1}\|^2$, and they also correspond to the squares of the diagonal entries in $D$ up to a permutation.
    Hence $D$ has a single non-unity diagonal entry given by $\|\widetilde{\textbf{U}}_{,d-1}\|$.
    Normalizing the last column of $\widetilde{\textbf{U}}$ is therefore equivalent to replacing that entry with $1$, which also yields our desired transformation to the nearest unitary,
    \begin{equation}
        \widetilde{\textbf{U}}=PDQ^\dagger~\rightarrow~PIQ^\dagger=PQ^\dagger,
    \end{equation}
    where $I$ is the $d\times d$ identity.
\end{proof}

The final modification we will make in order to stabilize our algorithm is the method of orthogonalizing the Krylov basis.
While the Gram-Schmidt orthonormalization as in \cref{quantum_iso_arnoldi} is useful for formally proving the results in \cref{algorithm}, it is not a numerically stable method for orthogonalizing the basis.
Instead, once we have regularized to obtain $\widetilde{\textbf{S}}$, we can orthonormalize as
\begin{equation}
\label{stable_orthogonalization}
    \widetilde{\textbf{U}}=\widetilde{\textbf{S}}^{-1/2}\textbf{U}'\widetilde{\textbf{S}}^{-1/2},
\end{equation}
which is stable.
This yields $\widetilde{\textbf{U}}$ that is related to the $\widetilde{\textbf{U}}$ of \cref{quantum_iso_arnoldi} (line 11) by a unitary transformation.
The isometric step (line 12 in \cref{quantum_iso_arnoldi}) is replaced by the mapping to closest unitary procedure above, which is basis independent, so after that step the new $\widetilde{\textbf{U}}^T$ has the same eigenvalues as that of \cref{quantum_iso_arnoldi} (line 13), which form the nodes of the Szeg\"o quadrature rule.

However, its eigenvectors are different, since it is expressed in a different basis.
Recall that the weights of the quadrature rule are given by $|v_{k,0}|^2$ \eqref{weights_formula}, i.e., the square-magnitudes of the zeroth entries of the eigenvectors in the basis corresponding to Gram-Schmidt orthonormalization.
We now have new eigenvectors $\tilde{v}_k$ of $\widetilde{\textbf{U}}^T$ in the new basis \eqref{stable_orthogonalization}.
We would like to find the weights from these eigenvectors.

We do not have direct access to the basis transformation, but we do have the transformation back to the original, nonorthogonal basis: from \eqref{stable_orthogonalization}, the $k$th eigenvector is $\widetilde{\textbf{S}}^{1/2}\tilde{v}_k$ when expressed in the nonorthogonal basis.
We also know that Gram-Schmidt leaves zeroth entries intact, and $v_k$ is obtained by applying Gram-Schmidt to the eigenvectors in the nonorthogonal basis, so the zeroth entry of $v_k$ is unchanged from its value in the nonorthogonal basis:
\begin{equation}
\label{nonortho_to_ortho}
    v_{k,0}=[\widetilde{\textbf{S}}^{1/2}\tilde{v}_k]_0,
\end{equation}
where $[\widetilde{\textbf{S}}^{1/2}\tilde{v}_k]_0$ denotes the zeroth entry of the vector $\widetilde{\textbf{S}}^{1/2}\tilde{v}_k$.
The right-hand side in \eqref{nonortho_to_ortho} is the inner product of the zeroth row of $\widetilde{\textbf{S}}^{1/2}$, which we denote $\vec{s}_0$, with $\tilde{v}_k$, so we obtain
\begin{equation}
    v_{k,0}=(\vec{s}_0,\tilde{v}_k).
\end{equation}
Hence squaring these inner products for each $k$ yields the weights:
\begin{equation}
    |v_{k,0}|^2=|(\vec{s}_0,\tilde{v}_k)|^2.
\end{equation}

With these changes incorporated, we end up with a new, stable version of quantum Szeg\"o quadrature.
We present this by first wrapping the regularization and projection-to-unitary procedure into one subroutine:

\begin{widetext}
\begin{myalgorithm}[``Regularization and projection-to-unitary"]~
\label{regularization_alg}
\begin{algorithmic}[1]

    \State\textbf{input} noisy quantum Krylov matrices $(\textbf{U}',\textbf{S}')$

    \State Calculate $\lambda_\text{min}(\textbf{S}')$ \Comment{Tikhonov regularization}
    
    \State Choose $\eta>0$

    \If{$\lambda_\text{min}(\textbf{S}')<\eta$}
    
    \State $\widetilde{\textbf{S}}\gets\textbf{S}'-\lambda_\text{min}(\textbf{S}')+\eta$

    \Else

    \State $\widetilde{\textbf{S}}\gets\textbf{S}'$

    \EndIf

    \State $\widetilde{\textbf{U}} \gets\widetilde{\textbf{S}}^{-1/2}\textbf{U}'\widetilde{\textbf{S}}^{-1/2}$ \Comment{orthonormalization}

    \State $PDQ^\dagger=\widetilde{\textbf{U}}$ \Comment{singular value decomposition}

    \State $\widetilde{\textbf{U}} \gets PQ^\dagger$ \Comment{isometric Arnoldi}

    \State\textbf{return} $\widetilde{\textbf{U}}$
    
\end{algorithmic}
\end{myalgorithm}

The regularization step (lines 4-7) guarantees that the Gram matrix $\widetilde{\textbf{S}}$ has smallest eigenvalue of at least $\eta$.
Using this subroutine, we can now present the full stabilized quantum Szeg\"o quadrature algorithm:

\begin{myalgorithm}[``Quantum Szeg\"o Quadrature~--- stable version"]~
\label{noisy_quantum_iso_arnoldi}
\begin{algorithmic}[1]

    \State\textbf{input} $n$-qubit unitary operator $U$, Krylov dimension $d$, initial state $|\psi_0\rangle$

    \State $\textbf{X}_{0}\gets1$

    \For{$j=1,2,\ldots,d$} \Comment{on quantum (Krylov circuits)}

    \State Calculate $\textbf{X}'_{j}\approx\langle\psi_0|U^{j}|\psi_0\rangle\qquad$

    \State $\textbf{X}_{-j}\gets\overline{\textbf{X}_{j}}$

    \EndFor

    \For{$i,j=0,1,2,\ldots,d-1$} \Comment{build noisy Krylov matrices $(\textbf{U}',\textbf{S}')$}

    \State $\textbf{U}'_{ij}\gets\textbf{X}'_{j-i+1}$

    \State $\textbf{S}'_{ij}\gets\textbf{X}'_{j-i}$

    \EndFor

    \State $F \gets$ \textbf{\cref{regularization_alg}}

    \State $\widetilde{\textbf{U}} \gets F(\textbf{U}',\textbf{S}')$

    \State $\vec{\lambda} = \text{eigvals}\big(\widetilde{\textbf{U}}^T\big)$

    \State $V = \text{eigvecs}\big(\widetilde{\textbf{U}}^T\big)$

    \State $\vec{s}\gets\big[\widetilde{\textbf{S}}^{1/2}\big]_{0,}$ \Comment{zeroth row of $\widetilde{\textbf{S}}^{1/2}$}

    \State $\vec{\omega}\gets\vec{s}\,V$

    \State $\vec{\omega}\gets\vec{\omega}^{\,2}$ \Comment{entrywise square}

    \State\textbf{return} $\vec{\lambda},\vec{\omega}$
    
\end{algorithmic}
\end{myalgorithm}
\end{widetext}

The returned vectors are the nodes and corresponding weights of the Szeg\"o quadrature rule \eqref{szego_quadrature}.
The approximation in line 4 indicates the inclusion of finite-sample, algorithmic (e.g., Trotter), and device error in the estimates $(\textbf{U}',\textbf{S}')$.

\section{Numerics}
\label{numerics}

In order to construct classical numerical illustrations of the performance of QSQ, we must choose representative unitaries $U$ and states $|\psi_0\rangle$.
We choose for our $U$ a time evolution operator of a Heisenberg model on a $4\times3$ square lattice, whose Hamiltonian is
\begin{equation}
\label{heis_ham}
    H=\sum_{i}hZ_i + \sum_{\langle i,j\rangle}\left(j_1X_iX_j+j_2Y_iY_j+j_3Z_iZ_j\right),
\end{equation}
where $\langle i,j\rangle$ denotes neighboring pairs in the lattice.
We use a version of an XXZ model, namely ${h=j_1=j_2=1}$, $j_3=2$.
The time evolution $U$ is then defined
\begin{equation}
    U=e^{-iH\,\Delta t},
\end{equation}
with $\Delta t=\pi/\|H\|$.
For $|\psi_0\rangle$, we use the $6$-particle (half-filling) antiferromagnetic state, i.e.,
\begin{equation}
\label{psi_0_def}
    |\psi_0\rangle=|1010\ldots10\rangle
\end{equation}
where the qubits in the lattice are ordered $(0,0),(0,1),(0,2),(0,3),(1,0),\ldots$
This state carries $99.9\%$ of its probability in 165 energy eigenstates, and is entirely supported on 272 eigenstates, so at the Krylov dimensions we consider there will not be a risk of saturating the effective Hilbert space.

We begin by checking in \cref{fig:random_laurent_poly} that the analytic claims in \cref{algorithm} do in fact hold in practice.
This figure shows average errors from applying QSQ at a given Krylov dimension to random Laurent polynomials of a given degree.
We know from \cref{algorithm} that QSQ should be exact for degree-$d$ Laurent polynomials at Krylov dimension $d+1$, and this is exactly what we see in \cref{fig:random_laurent_poly}: for each degree $d$ we see the error ``jump'' down to the range of machine precision at Krylov dimension $d+1$.

We can also test the stability of QSQ in the presence of noise, as discussed in \cref{stable_alg}.
To illustrate this, we consider a degree-$5$ monomial $U^5$, which in the absence of noise we would expect to converge to exact in degree-$6$ QSQ.
Results of simulating QSQ with input data subject to Gaussian noise of varying strength are shown in \cref{fig:noisy_monomial}.
As the figure shows, once the degree required for convergence is reached, the output relative errors converge to values that scale approximately linearly with the input noise strength.
This indicates stability in the presence of noise, although a more in-depth numerical and end-to-end analytical study would be valuable and we defer that to future work.

\begin{figure}[t]
    \centering
    \includegraphics[width=\linewidth]{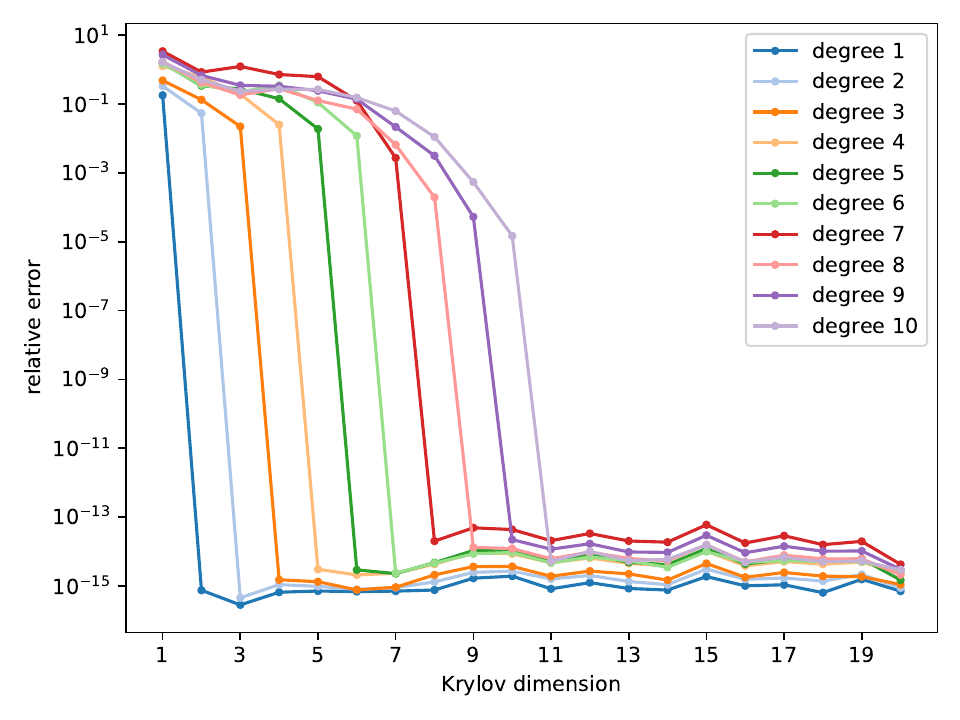}
    \caption{Each point is the mean of the relative errors in approximating \eqref{qsq_output} for 10 random Laurent polynomials of degree given in the legend, plotted against the Krylov dimension used for QSQ. The analysis in \cref{algorithm} shows that degree-$d$ Laurent polynomials should be captured exactly at Krylov dimension $d+1$, which is what we see. Left-to-right order of the curves corresponds to top-to-bottom order of the legend.}
    \label{fig:random_laurent_poly}
\end{figure}

\begin{figure}[t]
    \centering
    \includegraphics[width=\linewidth]{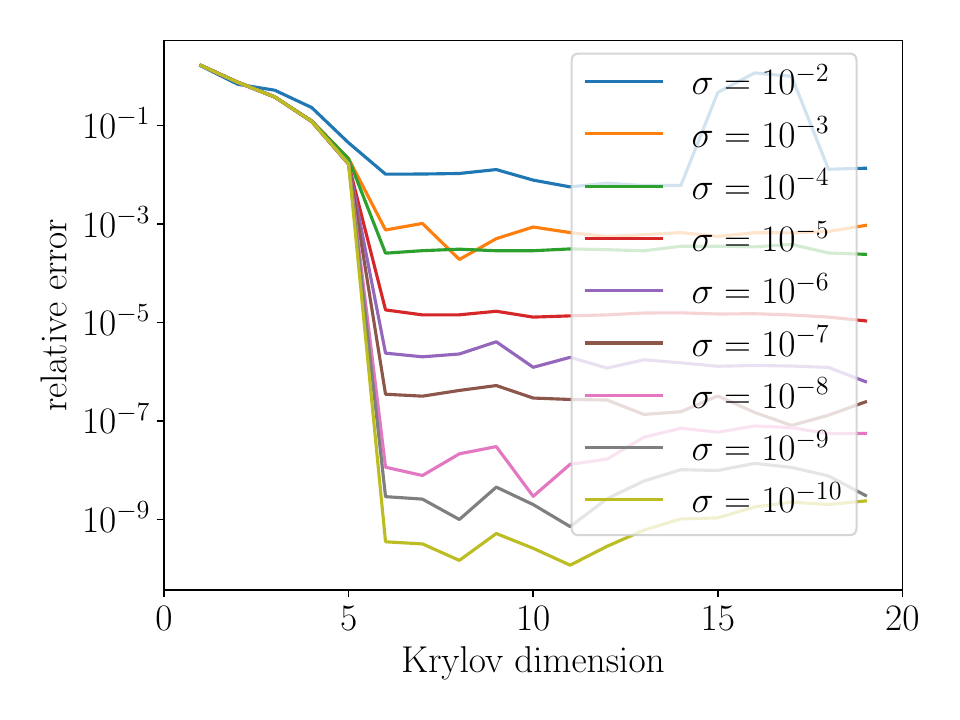}
    \caption{Performance of QSQ for $f(U)=U^5$ with input data subject to Gaussian noise of width $\sigma$. Note that the output relative errors scale approximately linearly with $\sigma$ once the degree required for convergence (6) is reached. Vertical order of the curves corresponds to vertical order of the legend.}
    \label{fig:noisy_monomial}
\end{figure}

Going beyond polynomials, we study three cases of functions that we can only hope to approximate using a quadrature rule, but for which we will see that the rule can still converge to good accuracy.
The first of these cases is a (unnormalized) Gibbs state, i.e., the function
\begin{equation}
\label{gibbs_state_fn}
    f(U)=U^{-i\beta/\Delta t}=e^{-\beta H}.
\end{equation}
With this function, the functional \eqref{qsq_output} may be of direct interest in the context of estimating the trace of \eqref{gibbs_state_fn}, which is the partition function.
In that case, $|\psi_0\rangle$ would be drawn from some distribution and the resulting quantities \eqref{qsq_output} would be used to form a trace estimator (e.g.~\cite{sirui2021,Ghanem2023robustextractionof,shen2024trace,summer2024,goh2024directestimationdensitystates}).
In some special cases (for example in the thermodynamic limit depending on the temperature) we might be able to use a single state $|\psi_0\rangle$ given a typicality condition~\cite{sugiura2012,sugiura2013}.

More generally, the functional \eqref{qsq_output_general} can be used to estimate, for example, the trace of a product $\hat{O}e^{-\beta H}$ for an observable $\hat{O}$, provided the action of $\hat{O}$ on $|\psi_0\rangle$ can be calculated classically.
Given this, we can take
\begin{equation}
    |\psi_1\rangle=\frac{\hat{O}|\psi_0\rangle}{\|\hat{O}|\psi_0\rangle\|},
\end{equation}
and the expression \eqref{qsq_output_general} becomes
\begin{equation}
    \|\hat{O}|\psi_0\rangle\|\langle\psi_1|f(U)|\psi_0\rangle=\langle\psi_0|\hat{O}e^{-\beta H}|\psi_0\rangle,
\end{equation}
which can then be used in a trace estimator as above to estimate $\text{Tr}(\hat{O}e^{-\beta H})$.

For the numerical illustration, we focus on the simpler form \eqref{qsq_output}, since as discussed above it can be used to construct the general form, so its accuracy will be representative of the general case accuracy.
In \cref{fig:gibbs_varying_beta}, we show the results of QSQ for estimating
\begin{equation}
\label{gibbs_exp}
    \langle\psi_0|e^{-\beta H}|\psi_0\rangle=\langle\psi_0|U^{-i\beta/\Delta t}|\psi_0\rangle , 
\end{equation}
as a function of Krylov dimension, for various values of $\beta$.
Setting aside variations in convergence at early iterations and once the accuracy approaches machine precision, these all show exponential decay at roughly the same rate, between $O(e^{-d})$ and $O(e^{-2d})$, which are the upper and lower dashed lines in the plot.

\begin{figure}[t]
    \centering
    \includegraphics[width=\linewidth]{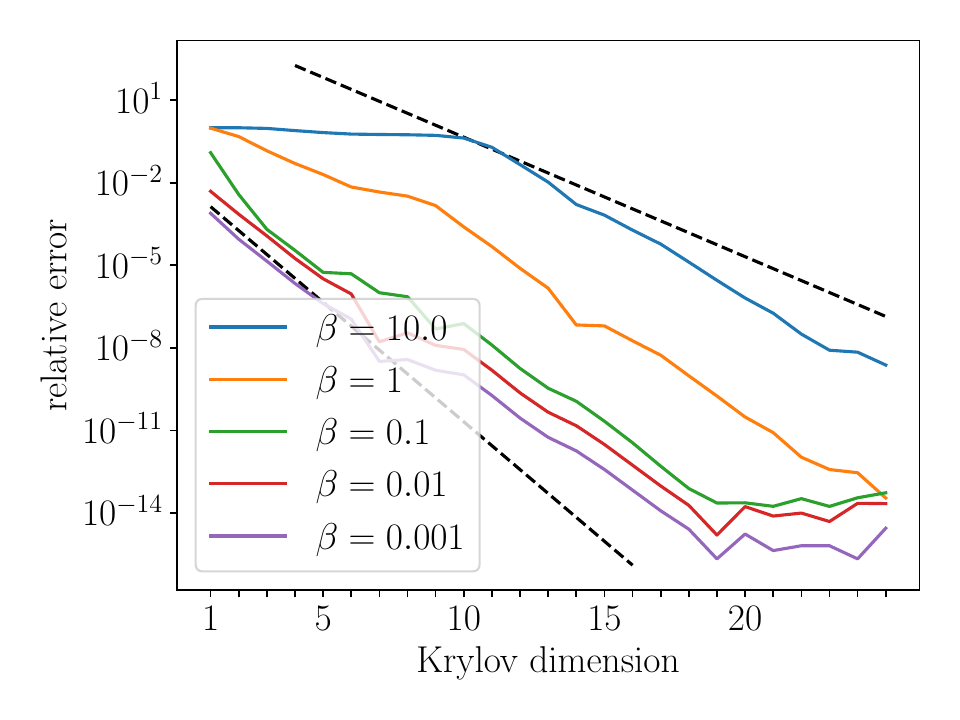}
    \caption{Relative errors in approximating the expectation value of the unnormalized Gibbs state as in \eqref{gibbs_exp}, as a function of Krylov dimension (degree of the quadrature rule). The upper and lower dashed lines represent $O(e^{-d})$ and $O(e^{-2d})$, respectively. Vertical order of the curves corresponds to vertical order of the legend.}
    \label{fig:gibbs_varying_beta}
\end{figure}

The Gibbs state is also a convenient example for motivating the use of QSQ in the beyond-polynomial setting, as opposed to some other post-processing one could perform using the same quantum data.
As discussed above, one might imagine approximating $f$ directly as in \eqref{direct_Laurent_series}, since the terms on the right-hand side are exactly the quantities we measure in the quantum part of the experiment.
The only question in that case is how to choose the coefficients $\alpha_j$ in order to best approximate our desired function.
We consider three variants of this approach:
\begin{enumerate}
    \item Express $e^{-\beta H}$ as a Fourier series in $H$ and truncate it at order $d-1$.
    \item An analytic series construction to approximate $f$ by a Laurent polynomial at each fixed order, with no requirement that lower-order approximations be partial sums of higher-order. Such a series with provably exponential convergence for $e^{-\beta H}$ can be obtained from~\cite[Lemma 37]{vanApeldoorn2020quantumsdpsolvers}, along with an upper bound on its error; see \cref{app:approximating gibbs state} for details. The resulting error bound is
    \begin{equation}
    \label{fixed_bound}
        \left\|p_d(U)-e^{-\beta H}\right\|\le4\exp\left(\frac{\beta\|H\|}{\gamma}-\frac{d}{2}(1-\gamma)\right) , 
    \end{equation}
    where $p_d(U)$ is the degree-$d$ Laurent polynomial and $\gamma$ is a free parameter in $(0,1)$.
    \item Best of all, we can directly minimize ${\|f(U)-\sum_{j=-d+1}^{d-1}\alpha_jU^j\|}$ over $\vec{\alpha}$. This has a classically-calculable optimum for fixed-size problems, although finding it is not scalable with respect to problem size since it requires manipulations of the full matrix or the full spectral decomposition of $U$. However, it illustrates the best possible scenario for Laurent polynomial approximation of $f$ in the specific model we are studying.
\end{enumerate}

We compare the above three approaches to QSQ at $\beta=1$.
The results are shown in \cref{fig:gibbs_comparison}.
The truncated Fourier series converges linearly (although on the log-log scale this is not easy to distinguish from some other polynomial), and does not reach below relative error 1 until dimension 417.
Its large error is due to the fact that, although the Fourier series approximates $e^{-\beta x}$ well in the middle of the spectral range, it has large fluctuations near the ends of the range due to the steepness of the exponential curve.
The severity of this failure is set by the temperature, since that controls the decay of $e^{-\beta x}$, so for higher temperatures we could expect the Fourier series to perform better overall, but still to converge only linearly with the Krylov dimension.

The bound on the error for the fixed Laurent approximation does yield exponential convergence asymptotically. 
At each degree, we choose the value of $\gamma\in(0,1)$ that minimizes the error bound \eqref{fixed_bound}, and since the bound is in spectral norm we obtain a relative error by normalizing with a factor of $\|e^{-\beta H}\|^{-1}$.
The resulting improvement over the Fourier series is possible because higher-degree approximations are not constrained to contain lower-degree approximations as partial sums.
However, as \cref{fig:gibbs_comparison} illustrates, we do not see the exponential convergence start to kick in until degrees around one hundred.
We can understand this since we optimize the bound \eqref{fixed_bound} over $\gamma$, and the optimal $\gamma$ will not begin to move away from $1$ until $d$ surpasses $\sim2\beta\|H\|$, which is $66$ in the current example.

One might reasonably object that this bound is likely not tight and the actual fixed Laurent series might outperform it.
Instead of plotting that, we plot a lower bound for the error of any Laurent series approximation by plotting the error of the optimal Laurent series at each degree.
In this case, we optimize the approximation over the eigenvalues of $U$ itself, yielding an even better approximation than the bound discussed in the last few paragraphs of \cref{algorithm}.
This approximation is unfeasible to obtain in practice since it is as hard as calculating $f(U)$ directly, but it provides a lower bound on the error of any approximation relying on an explicit series expansion \eqref{direct_Laurent_series} using the data in $\textbf{S}$, unless that expansion additionally uses knowledge of the decomposition of $|\psi_0\rangle$ in the eigenbasis of $U$.
While it far outperforms the two prior approaches, its error is still several orders of magnitude worse than that of QSQ after the first few dimensions, requiring about 15 more dimensions in order to reach machine precision.

We may understand this by repeating the points noted at the end of \cref{algorithm}, namely that QSQ of dimension $d$ not only captures the optimal Laurent polynomial approximation of degree $d-1$, it provides some quadrature approximation of the residual.
As noted in \cref{algorithm}, if the residual is upper bounded in magnitude by $b$, then we can upper bound the error of QSQ by $2b$, but the fact that the red QSQ curve so dramatically outperforms the green optimal Laurent curve in \cref{fig:gibbs_comparison} illustrates that the quadrature approximation of the residual is far better than the worst case in this example.
To summarize, QSQ at worst captures the error bound of the optimal Laurent approximation (up to a factor of two) without needing to actually find that approximation, and in practice it can substantially improve even on that due to the quadrature approximation of the residual.

\begin{figure}[t]
    \centering
    \includegraphics[width=\linewidth]{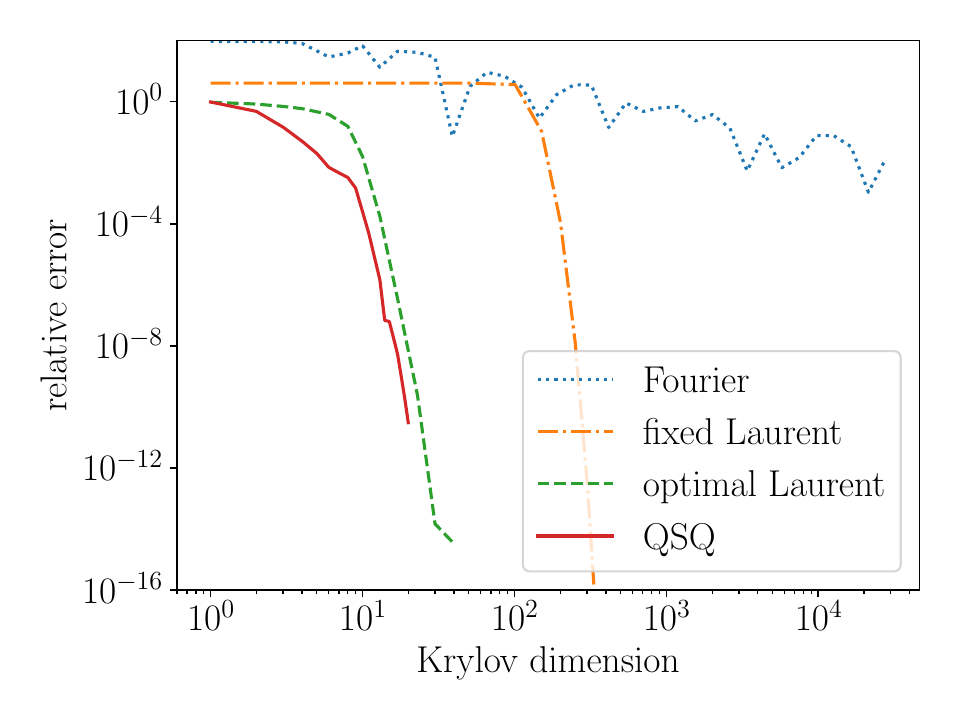}
    \caption{Comparing QSQ against several other methods that could be used to approximate the Gibbs state expectation value \eqref{gibbs_exp} using the same quantum data $(\textbf{U},\textbf{S})$ (see lines 3 and 4 in \cref{quantum_iso_arnoldi}). ``Fourier'' refers to expressing the Gibbs state as a Fourier series up to order $d-1$, evaluated using the data in $\textbf{S}$. ``Fixed Laurent'' refers to a specific construction for a Laurent series approximation of \eqref{gibbs_state_fn}. ``Optimal Laurent'' refers to optimizing over all possible Laurent series approximations of $f(U)$ at each degree, which has a classically-calculable optimum. See main text for a discussion of the results.}
    \label{fig:gibbs_comparison}
\end{figure}

The final numerical example we consider is the retarded Green's function in the frequency domain,
\begin{equation}
    G^R(\omega)=\langle\psi_0|\frac{1}{H-\omega-i\chi}|\psi_0\rangle,
\end{equation}
where $\omega$ is the frequency (expressed in units of energy) and $\chi$ is a regularization parameter that shifts the poles away from the real axis and thus controls the smoothness of the function.
In the following numerics, we set $\chi=0.1$.
\cref{fig:gf_fig} shows results of approximating $G^R(\omega)$ using QSQ, alongside the exact results.
These plots show how, as the degree of the quadrature rule is increased, the locations of the strong peaks (corresponding to poles with significant amplitude in $|\psi_0\rangle$) are refined and more of the detailed features are resolved.

To quantify the improvement in accuracy with degree we can also take the $l_1$-norm of the error, i.e., the pointwise differences in \cref{fig:gf_fig}, over the spectral range, which in our case is contained in $[-60,20]$, and plot that against the degree $d$ of the quadrature rule.
The results are shown in \cref{fig:gf_error}, illustrating that the convergence is approximately inverse-linear up until degree $\sim100$.
Beyond that point there is an acceleration due to the dimension of the Krylov subspace beginning to saturate the effective Hilbert space, i.e., the size of the support of $|\psi_0\rangle$ \eqref{psi_0_def} in the energy spectrum.

\begin{figure*}[ht]
    \centering
    \includegraphics[width=\linewidth]{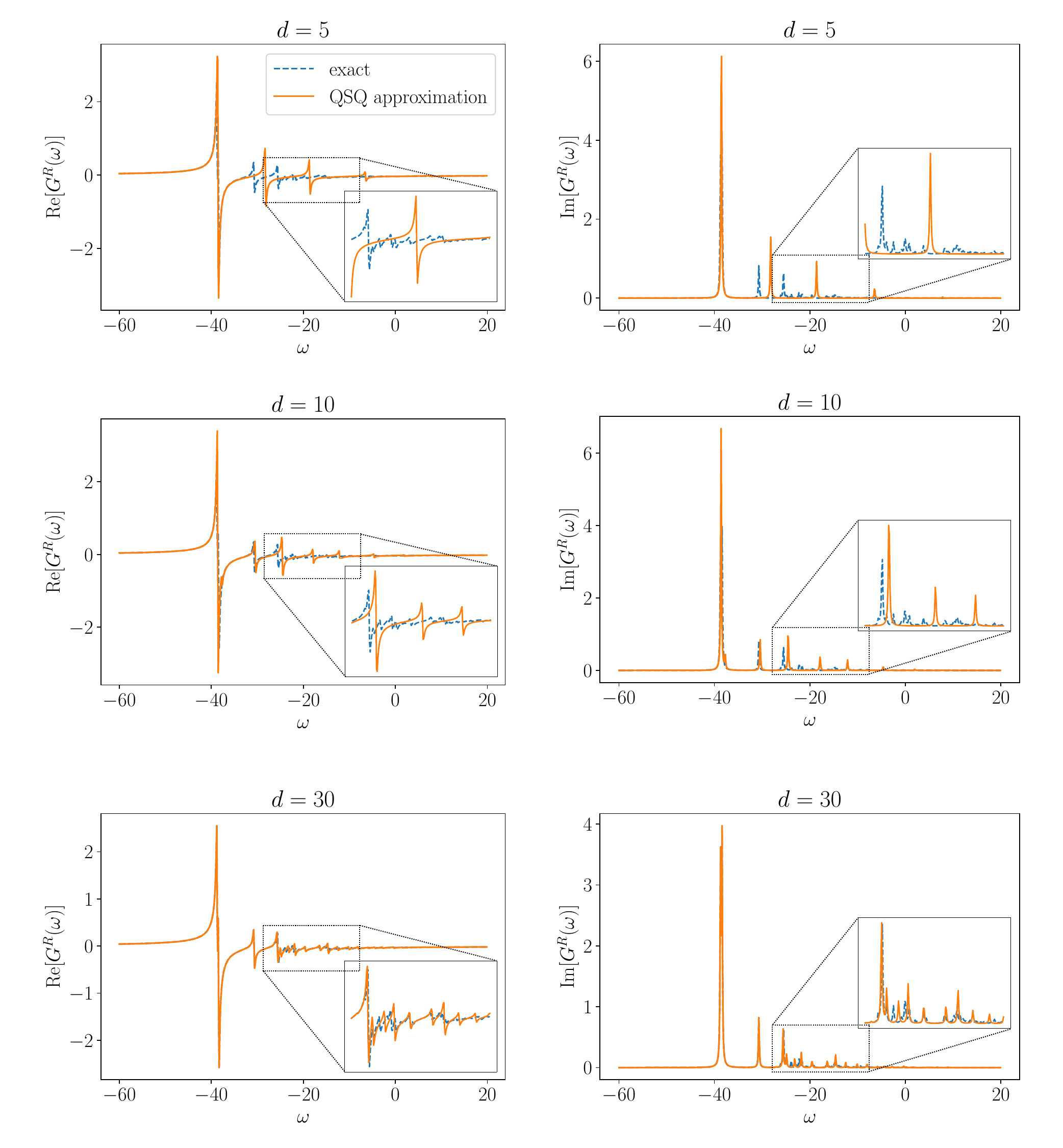}
    \caption{Approximation of the retarded Green's function using quantum Szeg\"o quadrature. The left column shows real parts and the right column shows imaginary parts, with the solid curves showing the approximation and the dashed curves showing the exact values. The rows correspond to Krylov dimensions of 5, 10, and 30, and hence quadrature rules of those degrees. The energies with significant amplitude in $|\psi_0\rangle$ \eqref{psi_0_def} are contained in $[-60,20]$, so all features of the Green's function are reflected in the above.}
    \label{fig:gf_fig}
\end{figure*}

\begin{figure}[ht]
    \centering
    \includegraphics[width=\linewidth]{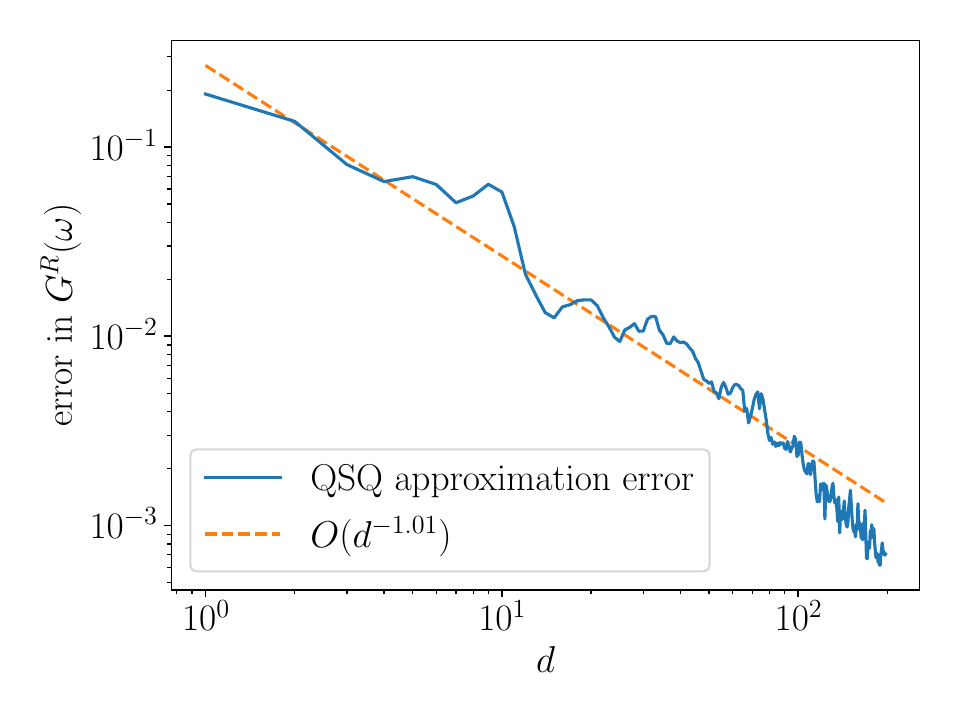}
    \caption{$l_1$-error of the QSQ approximation of $G^R(\omega)$ over the range $[-60,20]$, as a function of degree $d$ of the quadrature rule. The solid curve shows the actual data, while the dashed curve is the optimal linear fit to the log-log data (weighted to normalize the logarithmic point density). The legend gives the leading order of the corresponding power law. The acceleration of the decay of the data above $d\sim100$ is due to the dimension of the Krylov subspace beginning to saturate the effective Hilbert space.}
    \label{fig:gf_error}
\end{figure}

\section{Discussion}
\label{discussion}

We have presented a quantum algorithm, quantum Szeg\"o quadrature or QSQ, for approximating matrix elements of a unitary operator via a Szeg\"o quadrature rule constructed from quantum data.
QSQ can be compared to prior algorithms such as the quantum eigenvalue transformation of unitaries (QET-U)~\cite{dong2022groundstate}, as discussed in the introduction.
The main differences are, first, that QSQ yields a quadrature rule that can be used to approximate \eqref{qsq_output_general} for any function $f$ after the fact, while QET-U prepares a block-encoded approximation of $f(U)|\psi_0\rangle$ for a fixed $f$.
While in the latter case we can subsequently perform any quantum processing we like on the block-encoded state, QSQ's ability to approximate the matrix element for many different $f$ using the same quantum data is powerful, as we illustrated in the Green's function example in \cref{numerics}.

A second advantage is that QET-U requires constructing an explicit Laurent series approximation of $f$, while QSQ only requires the ability to evaluate $f$ at the nodes of the quadrature rule.
High-precision polynomial approximations of many functions of interest are indeed known, but the fact of having to construct them classically \emph{a priori} additionally means that they must be accurate over the entire spectral range and cannot depend on the specific distribution of eigenvalues.
QSQ, on the other hand, can be seen as constructing its approximation on the fly from quantum data about the unitary.
Hence it can lead to lower errors than any possible direct construction of a polynomial approximation.

To flesh out this idea further, at the end of \cref{algorithm}, we noted that the data $\textbf{X}$ collected from the quantum computer in \cref{quantum_iso_arnoldi} are approximations of
\begin{align}
\label{quantum_data}
\textbf{X}_j&=\langle\psi_0|U^j|\psi_0\rangle,& j&=-d,-d+1,\ldots,d.
\end{align}
A natural observation is therefore that instead of using the quadrature rule to approximate \eqref{qsq_output}, we could simply approximate $f$ by a Laurent polynomial of degree $d-1$ and then express \eqref{qsq_output} as the appropriate linear combination of terms in \eqref{quantum_data}.
While this is true, using QSQ can yield lower errors as shown in the numerical example \cref{fig:gibbs_comparison}, and it is guaranteed to produce at worst twice the error of the optimal polynomial approximation of degree $d-1$, as shown in \cref{beyond_laurent}.

A second alternative version of the algorithm would be to skip the calculation of eigenvalues and eigenvectors of $\widetilde{\textbf{U}}^T$ as in \cref{main_thm}, and instead simply return
\begin{equation}
    \big[f\big(\widetilde{\textbf{U}}^T\big)\big]_{0,0},
\end{equation}
the $(0,0)$th matrix element of $f\big(\widetilde{\textbf{U}}^T\big)$, as the approximation of \eqref{qsq_output}.
The justification for this is that if $\Lambda$ is the diagonal matrix of eigenvalues of $\widetilde{\textbf{U}}^T$, $V$ the corresponding matrix of eigenvectors, and $f$ has a Laurent polynomial expansion, then
\begin{equation}
\label{alternative_form}
\begin{split}
    \big[f\big(\widetilde{\textbf{U}}^T\big)\big]_{0,0}&=\big[f\big(V\Lambda V^\dagger\big)\big]_{0,0}\\
    &=\big[Vf\big(\Lambda\big)V^\dagger\big]_{0,0}\\
    &=\sum_{k=0}^{d-1}|v_{k,0}|^2f(\lambda_k).
\end{split}
\end{equation}
If $f$ does not have a Laurent polynomial expansion, then since $\widetilde{\textbf{U}}$ is always diagonalizable, the extension of the scalar function $f$ to a matrix function is defined by
\begin{equation}
    f\big(V\Lambda V^\dagger\big)\coloneqq Vf\big(\Lambda\big)V^\dagger , 
\end{equation}
so the second step above still holds.
The reader may recall that the final expression in \eqref{alternative_form} is exactly our quadrature rule, as given by \eqref{weights_formula} and \eqref{quantum_szego_rule}.
Hence this is simply an alternative final step for the algorithm, albeit one that obscures the interpretation as a quadrature rule, which is what allowed us to prove convergence for Laurent polynomials.

~
\begin{acknowledgments}

The authors thank Friederike Metz, Kunal Sharma, Minh Tran, Javier Robledo Moreno, and Mirko Amico for helpful discussions.  This research was supported by the U.S. Department of Energy (DOE) under Contract No. DEAC02-05CH11231, through the National Energy Research Scientific Computing Center (NERSC), an Office
of Science User Facility located at Lawrence Berkeley National Laboratory.

\end{acknowledgments}

\bibliographystyle{ieeetr}
\bibliography{references}


\appendix

\begin{widetext}

\section{Definitions, mathematical background, and proofs}
\label[appendix]{quantum_iso_arnoldi_appendix}

For convenience, we restate \cref{quantum_iso_arnoldi} from the main text:

\begin{myalgorithm}[``Quantum isometric Arnoldi method"]~
\label{quantum_iso_arnoldi_app}
\begin{algorithmic}[1]

    \State\textbf{input} $n$-qubit unitary operator $U$, Krylov dimension $d$, initial state $|\psi_0\rangle$

    \For{$i,j=0,1,2,\ldots,d-1$}

    \State Calculate $\textbf{U}_{ij}=\langle\psi_0|U^{j-i+1}|\psi_0\rangle$ \Comment{on quantum (standard Krylov circuits)}

    \State Calculate $\textbf{S}_{ij}=\langle\psi_0|U^{j-i}|\psi_0\rangle$ \Comment{on quantum (standard Krylov circuits)}

    \EndFor

    \State $\widetilde{\textbf{U}} \gets$ Gram-Schmidt orthogonalization of $\textbf{U}$ for Gram matrix $\textbf{S}$

    \State $\widetilde{\textbf{U}}_{,d-1} \gets \widetilde{\textbf{U}}_{,d-1}/\|\widetilde{\textbf{U}}_{,d-1}\|$ \Comment{$\widetilde{\textbf{U}}_{,d-1}$ denotes last column of $\widetilde{\textbf{U}}$}

    \State\textbf{return} $\widetilde{\textbf{U}}$
    
\end{algorithmic}
\end{myalgorithm}

Ref.~\cite{jagels2017generalized} contains a nice review of the following preliminary definitions:
\begin{definition}
    For any complex polynomial $p$ of degree $d$, the \emph{reciprocal polynomial} $p^*$ is defined as
    \begin{equation}
    \label{reciprocal_poly}
        p^*(z)=z^d\overline{p}(z^{-1}),
    \end{equation}
    where $\overline{p}(z)$ denotes complex conjugation of only the coefficients in $p$.
\end{definition}
\begin{definition}
    The \emph{Szeg\"o polynomials} $\phi_k$ associated to $d\mu$ are the sequence of orthogonal polynomials with respect to integration around $\mathds{T}$ with the measure $d\mu$.
    The corresponding \emph{para-orthogonal polynomials} $\widetilde{\phi}_k$ are defined as
    \begin{equation}
    \label{para_orthogonal_poly}
        \widetilde{\phi}_k(z)\coloneqq\phi_k(z)+\tau\phi^*_k(z),
    \end{equation}
    where $\tau\in\mathds{T}$ is a parameter.
\end{definition}
\begin{definition}
    A \emph{Laurent polynomial} of degree $d$ is a function of the form
    \begin{equation}
    \label{laurent_poly_app}
        f(z)=\sum_{j=-d}^d\alpha_jz^j.
    \end{equation}
    If at least one of $\alpha_d,\alpha_{-d}\neq0$, then we say that it is exact degree $d$.
\end{definition}

\begin{definition}
    A \emph{Szeg\"o quadrature rule} of degree $d$ is a set of $d$ nodes $\lambda_k$ and weights $\omega_k$ such that
    \begin{equation}
    \label{szego_quadrature_app}
        \int_\mathds{T}f(x)d\mu(x)=\sum_{k=1}^d\omega_kf(\lambda_k)
    \end{equation}
    holds for all Laurent polynomials $f$ of degree up to $d-1$.
\end{definition}

\begin{lemma}
\label{nodes_are_zeros}
    The nodes $\lambda_k$ of the Szeg\"o quadrature rule \eqref{szego_quadrature_app} are the zeros of the degree-$d$ para-orthogonal polynomial $\widetilde{\phi}_d$. 
\end{lemma}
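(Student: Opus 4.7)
The plan is to fix the nodes $\lambda_k$ as the zeros of $\widetilde{\phi}_d$ and then exhibit weights $\omega_k$ for which the resulting rule \eqref{szego_quadrature_app} is exact on every Laurent polynomial of degree at most $d-1$. Two classical facts about para-orthogonal polynomials of the form \eqref{para_orthogonal_poly} drive the argument: first, that $\widetilde{\phi}_d$ has $d$ simple zeros lying on $\mathds{T}$; second, that it satisfies the para-orthogonality relations
\begin{equation}
\int_\mathds{T} z^m\,\widetilde{\phi}_d(z)\,d\mu(z)=0,\qquad m=-(d-1),\ldots,-1.
\end{equation}
I would derive these relations by combining the Szeg\"o orthogonality $\int\phi_d(z)z^m d\mu=0$ for $m\in\{-(d-1),\ldots,0\}$ with the corresponding relation for $\phi_d^*$, which on $\mathds{T}$ reads $\phi_d^*(z)=z^d\overline{\phi_d(z)}$ by \eqref{reciprocal_poly}; taking complex conjugates and reusing the orthogonality of $\phi_d$ gives $\int\phi_d^*(z)z^m d\mu=0$ for $m\in\{-d,\ldots,-1\}$. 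Intersecting the two index ranges yields the stated para-orthogonality window.

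With the nodes fixed, I would determine the weights by requiring exactness on the $d$-dimensional Laurent subspace $\Lambda\coloneqq\mathrm{span}\{z^{-(d-1)},z^{-(d-2)},\ldots,z^{-1},1\}$. The associated linear system has a generalized Vandermonde matrix $[\lambda_k^m]_{k,m}$, which is nonsingular because the nodes are distinct; thus the weights exist and are unique.

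To promote exactness from $\Lambda$ to the full $(2d-1)$-dimensional space of Laurent polynomials of degree at most $d-1$, I would use polynomial division. Given such an $f$, form $g(z)\coloneqq z^{d-1}f(z)$, an ordinary polynomial of degree at most $2d-2$, and write $g=q\,\widetilde{\phi}_d+r$ with $\deg q\le d-2$ and $\deg r\le d-1$. Then
\begin{equation}
f(z)=z^{-(d-1)}q(z)\,\widetilde{\phi}_d(z)+z^{-(d-1)}r(z).
\end{equation}
The first summand vanishes at each node because $\widetilde{\phi}_d(\lambda_k)=0$, and integrates to zero against $d\mu$ because $z^{-(d-1)}q(z)$ is a linear combination of the monomials $z^{-(d-1)},\ldots,z^{-1}$, each annihilated by the para-orthogonality relations. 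The second summand lies in $\Lambda$, so both the integral and the quadrature sum agree on it by construction. Hence the rule is exact on $f$.

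The main obstacle is bookkeeping the index ranges in the para-orthogonality step: $\widetilde{\phi}_d$ is \emph{not} fully orthogonal, only para-orthogonal, and the argument hinges on the fact that polynomial division forces $\deg q\le d-2$, which keeps $z^{-(d-1)}q(z)$ strictly inside the annihilator span $\{z^{-(d-1)},\ldots,z^{-1}\}$ and prevents a constant term from leaking in (against which $\widetilde{\phi}_d$ is \emph{not} orthogonal). A secondary ingredient, the location and simplicity of the zeros of $\widetilde{\phi}_d$, is standard and can be invoked from~\cite{jones1989moment,gragg1993positive}.
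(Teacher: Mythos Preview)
Your argument is correct. The para-orthogonality window you derive, the interpolatory determination of the weights on $\Lambda=\mathrm{span}\{z^{-(d-1)},\ldots,1\}$, and the polynomial-division step all check out; the degree bookkeeping that keeps $z^{-(d-1)}q(z)$ inside $\mathrm{span}\{z^{-(d-1)},\ldots,z^{-1}\}$ is exactly the point, and you flag it correctly.

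The paper, however, does not prove this lemma at all: it simply cites Theorem~8.4 of~\cite{jones1989moment}. So your proposal is not an alternative to the paper's argument so much as a self-contained unpacking of the cited result. What you have written is essentially the classical Jones--Nj{\aa}stad--Thron proof, and it buys the reader independence from the external reference at the cost of a page of calculation. One small remark: as phrased, the lemma could be read as a uniqueness statement (``\emph{the} nodes are the zeros''), whereas what you prove---and what the paper actually needs for \cref{main_thm}---is the existence direction: given $\tau\in\mathds{T}$, the zeros of $\widetilde{\phi}_d$ admit weights making \eqref{szego_quadrature_app} exact on $\mathds{L}_{d-1}$. That is the correct direction to target, and your proof delivers it.
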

\begin{proof}
    Proved in Theorem 8.4 in~\cite{jones1989moment}.
\end{proof}

\begin{lemma}
\label{gragg_results}
    $\widetilde{\textbf{U}}^T$ has $d$ distinct eigenvalues, and they are the zeros of the degree-$d$ para-orthogonal polynomial $\widetilde{\phi}_d$.
    An eigenvector corresponding to eigenvalue $\lambda_k$ is
    \begin{equation}
        \vec{\phi}(\lambda_k)\coloneqq(\phi_0(\lambda_k),\phi_1(\lambda_k),\ldots,\phi_{d-1}(\lambda_k))^T.
    \end{equation}
\end{lemma}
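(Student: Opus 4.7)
The plan is to identify $\widetilde{\textbf{U}}$ with the classical isometric Hessenberg matrix of Gragg~\cite{gragg1993positive} associated to the measure $d\mu$ defined in \eqref{linear_mu}, and then to transport over the known description of its spectrum and eigenvectors in terms of Szeg\"o and para-orthogonal polynomials. The first step is to match inner products: the Hilbert-space inner product $\langle\psi_0|U^{j-i}|\psi_0\rangle$ that populates $\textbf{S}$ expands in the eigenbasis of $U$ to $\sum_m|\gamma_m|^2\zeta_m^{j-i}$, which equals $\int_\mathds{T}\overline{z^i}z^j\,d\mu(z)$ since $|\zeta_m|=1$. Consequently, the Gram-Schmidt orthonormalization on line~11 of \cref{quantum_iso_arnoldi_app} converts the quantum Krylov basis into $\{\phi_k(U)|\psi_0\rangle\}_{k=0}^{d-1}$, where $\phi_k$ is the $k$th orthonormal Szeg\"o polynomial for $d\mu$, and the matrix $\widetilde{\textbf{U}}$ prior to line~12 is the matrix of multiplication by $z$ on $\mathrm{span}\{\phi_0,\ldots,\phi_{d-1}\}$ expressed in this orthonormal basis, with its last column truncated because $z\phi_{d-1}(z)$ has a nonzero $\phi_d$ component.

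Next, the normalization on line~12 is precisely the isometric Arnoldi closure of~\cite[Sec.~3]{gragg1993positive}: it forces the matrix to be unitary by absorbing the norm of the truncated last column into a unimodular parameter $\tau\in\mathds{T}$. The resulting unitary upper-Hessenberg matrix has characteristic polynomial equal, up to a nonzero scalar, to the para-orthogonal polynomial $\widetilde{\phi}_d=\phi_d+\tau\phi_d^*$; this is the central structural result of Gragg. Distinctness of the zeros of $\widetilde{\phi}_d$ and their location on $\mathds{T}$ then follow from Theorem~6.1 of~\cite{jones1989moment}, giving the first claim of the lemma.

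For the eigenvector claim I would verify $\widetilde{\textbf{U}}^T\vec{\phi}(\lambda_k)=\lambda_k\vec{\phi}(\lambda_k)$ directly from the Szeg\"o three-term recurrence. Because $\widetilde{\textbf{U}}$ is the matrix of multiplication by $z$ in the basis $\{\phi_j\}$, row $j$ of this eigenequation is nothing but the expansion of $z\phi_j(z)$ in the Szeg\"o basis, evaluated at $z=\lambda_k$. For $j<d-1$ this holds identically by the recurrence, while for $j=d-1$ the component along $\phi_d$ that would normally obstruct the relation is replaced, via the $\tau$-twist introduced by line~12, by the evaluation of $\widetilde{\phi}_d$ at $\lambda_k$, which vanishes by the previous paragraph. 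I expect the main obstacle to be the careful bookkeeping of the transpose, the phase $\tau$, and the boundary row $j=d-1$, so that the residual term in the eigenequation is shown cleanly to be a scalar multiple of $\widetilde{\phi}_d(\lambda_k)$; every other step is either a direct identification of inner products or a transcription of cited results from~\cite{gragg1993positive,jones1989moment}.
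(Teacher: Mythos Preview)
Your proposal is correct and matches the paper's approach: the paper's proof is simply a citation to Gragg~\cite{gragg1993positive} (pp.~195--196) for both claims, relying on the identification of $\widetilde{\textbf{U}}$ with Gragg's isometric Hessenberg matrix that is already made in the main text just after \cref{quantum_iso_arnoldi}; you are just unpacking that identification explicitly and then transporting over the cited results. One caveat on terminology: the Szeg\"o polynomials do not satisfy a three-term recurrence but rather the Szeg\"o recursion coupling $\phi_n$ with $\phi_n^*$, so your row-by-row verification of $\widetilde{\textbf{U}}^T\vec{\phi}(\lambda_k)=\lambda_k\vec{\phi}(\lambda_k)$ should invoke the full upper-Hessenberg structure (exactly as Gragg does) rather than tridiagonality.
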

\begin{proof}
    These results are shown in~\cite{gragg1993positive}, with their statements on pp.~195 and 196.
\end{proof}

\textbf{Theorem~\ref{main_thm}.}
\emph{
    The nodes $\lambda_k$ of the Szeg\"o quadrature rule \eqref{szego_quadrature_app} are the eigenvalues of $\widetilde{\textbf{U}}^T$.
    The corresponding weights $\omega_k$ are given by
    \begin{equation}
    \label{weights_formula_app}
        \omega_k=|v_{k,0}|^2,
    \end{equation}
    where $v_{k,0}$ is the zeroth entry of a normalized eigenvector $v_k$ of $\widetilde{\textbf{U}}^T$ with eigenvalue $\lambda_k$.
}
\begin{proof}
    The claim regarding nodes follows immediately by combining \cref{nodes_are_zeros} and \cref{gragg_results}.
    The claimed expression \eqref{weights_formula_app} for the weights can in principle be obtained from~\cite{gragg1993positive} as well, as noted in~\cite{jagels2017generalized}, but a lot of work is left for the reader.
    We therefore provide a more detailed proof for the sake of completeness.
    
    From \cref{gragg_results} we have that $\vec{\phi}(\lambda_k)$ is an eigenvector of $\widetilde{\textbf{U}}^T$ corresponding to eigenvalue $\lambda_k$.
    Also, we can directly diagonalize $\widetilde{\textbf{U}}^T$ and obtain the normalized eigenvector $\vec{v}_k$ corresponding to $\lambda_k$.
    Hence $\vec{\phi}(\lambda_k)\propto\vec{v}_k$, so
    \begin{equation}
    \label{vector_length}
        \|\vec{\phi}(\lambda_k)\|=\frac{\|\vec{\phi}(\lambda_k)\|}{\|\vec{v}_k\|}=\frac{|\phi_0(\lambda_k)|}{|v_{k,0}|}=\frac{1}{|v_{k,0}|}
    \end{equation}
    since the zeroth entry of $\vec{\phi}(\lambda_k)$ is $\phi_0(\lambda_k)=1$.

    Note that the remainder of the proof roughly follows the corresponding argument for Gauss quadrature as presented in \cite[Section 2.9]{wilf1962mathematics}.
    The main modifications are due to the different Christoffel-Darboux identity required for the complex unit circle.
    
    For the weights, we begin with eq.~(11.4.5) in \cite{szego1939orthogonal}, a version of the Christoffel-Darboux identity for polynomials that are orthogonal with respect to measure $d\mu$ on the complex unit circle:
    \begin{equation}
    \label{christoffel_darboux}
        \sum_{j=0}^{d-1}\overline{\phi_j(x)}\phi_j(y) = \frac{\overline{\phi_d^*(x)}\phi_d^*(y)-\overline{\phi_d(x)}\phi_d(y)}{1-\overline{x}y}.
    \end{equation}
    We first evaluate this at $x=\lambda_k$: explicitly inserting the definition \eqref{reciprocal_poly} of the reciprocal polynomials and using the fact that $|\lambda_k|=1$ so that $\overline{\lambda_k}=\lambda_k^{-1}$ yields
    \begin{equation}
        \sum_{j=0}^{d-1}\overline{\phi_j(\lambda_k)}\phi_j(y) = \lambda_k\frac{\overline{\phi_d^*(\lambda_k)}\phi_d^*(y)-\overline{\phi_d(\lambda_k)}\phi_d(y)}{\lambda_k-y}.
    \end{equation}
    Now, from \cref{nodes_are_zeros} we know that $\lambda_k$ is a zero of $\widetilde{\phi}_d$.
    Hence from the definition \eqref{para_orthogonal_poly} we have that
    \begin{equation}
    \label{trick_at_zeros}
        \phi_d^*(\lambda_k)=-\tau^{-1}\phi_d(\lambda_k);
    \end{equation}
    inserting this yields
    \begin{equation}
        \sum_{j=0}^{d-1}\overline{\phi_j(\lambda_k)}\phi_j(y) = -\lambda_k\overline{\phi_d(\lambda_k)}\frac{\tau\phi_d^*(y)+\phi_d(y)}{\lambda_k-y} = \lambda_k\overline{\phi_d(\lambda_k)}\frac{\widetilde{\phi}_d(y)}{y-\lambda_k},
    \end{equation}
    where we also used the fact that $|\tau|=1$.
    Next, we integrate $y$ over $d\mu$ around $\mathds{T}$:
    \begin{equation}
    \label{integral_form_1}
        1 = \lambda_k\overline{\phi_d(\lambda_k)}\int_{\mathds{T}}\frac{\widetilde{\phi}_d(y)}{y-\lambda_k}d\mu(y),
    \end{equation}
    where the left-hand side follows because $\phi_0(y)=\phi_0(\lambda_k)=1$ and the $\phi_j(y)$ are orthonormal with respect to $d\mu$, so that $\int_{\textbf{T}}\phi_j(y)d\mu(y)=\delta_{j,0}$.
    Finally, we note that by Theorem 8.4 and Eq. (7.7) in~\cite{jones1989moment},
    \begin{equation}
        \omega_k=\int_{\mathds{T}}\frac{\widetilde{\phi}_d(y)}{(y-\lambda_k)\widetilde{\phi}_d'(\lambda_k)}d\mu(y),
    \end{equation}
    where $\widetilde{\phi}_d'(\lambda_k)$ denotes the first derivative of $\widetilde{\phi}_d$ evaluated at $\lambda_k$ (so the denominator in the argument of the integral is the first-order expansion of $\widetilde{\phi}_d$ around $\lambda_k$).
    Substituting this into \eqref{integral_form_1} yields
    \begin{equation}
    \label{omega_form_1}
        \omega_k = \frac{1}{\lambda_k\overline{\phi_d(\lambda_k)}\widetilde{\phi}_d'(\lambda_k)}.
    \end{equation}
    
    Next, we evaluate the Christoffel-Darboux formula \eqref{christoffel_darboux} at $x=y=\lambda_k$:
    \begin{equation}
        \sum_{j=0}^{d-1}|\phi_j(\lambda_k)|^2 = \lim_{y\rightarrow\lambda_k}\frac{\overline{\phi_d^*(\lambda_k)}\phi_d^*(y)-\overline{\phi_d(\lambda_k)}\phi_d(y)}{1-\overline{\lambda_k}y}.
    \end{equation}
    We must initially write the right-hand side as a limit since both the numerator and the denominator vanish at $y=\lambda_k$.
    Hence we can apply l'Hopital's rule to evaluate the limit:
    \begin{equation}
        \sum_{j=0}^{d-1}|\phi_j(\lambda_k)|^2 = \lim_{y\rightarrow\lambda_k}\frac{\overline{\tau\phi_d^*(\lambda_k)}\left(\widetilde{\phi}_d'(y)-\phi_d'(y)\right)-\overline{\phi_d(\lambda_k)}\phi_d'(y)}{-\overline{\lambda_k}},
    \end{equation}
    where we used \eqref{para_orthogonal_poly} to insert $\frac{d}{dy}\phi_d^*(y)=\tau^{-1}\widetilde{\phi}_d'(y)-\tau^{-1}\phi_d'(y)$.
    Now we can evaluate the limit, and also use \eqref{trick_at_zeros} to simplify the first term:
    \begin{equation}
        \sum_{j=0}^{d-1}|\phi_j(\lambda_k)|^2 = \frac{-\overline{\phi_d(\lambda_k)}\left(\widetilde{\phi}_d'(\lambda_k)-\phi_d'(\lambda_k)\right)-\overline{\phi_d(\lambda_k)}\phi_d'(\lambda_k)}{-\overline{\lambda_k}} = \lambda_k\overline{\phi_d(\lambda_k)}\widetilde{\phi}_d'(\lambda_k),
    \end{equation}
    where in the second step we used the fact that $|\lambda_k|=1$.
    Inserting this into \eqref{omega_form_1} yields
    \begin{equation}
    \label{omega_form_2}
        \omega_k = \frac{1}{\sum_{j=0}^{d-1}|\phi_j(\lambda_k)|^2} = \frac{1}{\|\vec{\phi}(\lambda_k)\|^2}.
    \end{equation}
    The proof of \eqref{weights_formula_app} is completed by inserting \eqref{vector_length}.
    
\end{proof}

\section{Approximating \texorpdfstring{$e^{-\beta\hat{H}}$}{} by a polynomial in \texorpdfstring{$U$}{}}
\label{app:approximating gibbs state}

\begin{lemma}
\label{gibbs_trig_poly_lemma}
    For any $\gamma,\delta\in(0,1), \lambda>0$, there exists $z_j\in\mathds{C}$ such that
    \begin{equation}
    \label{gibbs_trig_poly}
        \max_{x\in[-\gamma,\gamma]}\left|\sum_{j=-d}^dz_je^{ij\pi x/2}-e^{-\lambda(x+1)}\right|\le\delta,
    \end{equation}
    for $d=2\left\lceil\log\left(\frac{4}{\delta}\right)/(1-\gamma)\right\rceil$ and $\sum_{j=-d}^d|z_j|\le1$.
\end{lemma}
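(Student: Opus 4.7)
The plan is to proceed by dichotomy on the size of $\lambda$ and, in the nontrivial regime, construct an explicit trigonometric polynomial from a truncated Fourier-type expansion.

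First, I would observe that on $[-\gamma,\gamma]$ we have $x+1\ge 1-\gamma$, so $0\le e^{-\lambda(x+1)}\le e^{-\lambda(1-\gamma)}$. If $\lambda(1-\gamma)\ge\log(4/\delta)$, equivalently $\lambda\ge d/2$, then this pointwise bound is at most $\delta/4\le\delta$, and the trivial choice $z_j=0$ for all $j$ simultaneously satisfies the approximation bound and $\sum_j|z_j|=0\le 1$. So it suffices to handle the regime $\lambda<d/2$, where the target function is not yet small on $[-\gamma,\gamma]$ and a nontrivial approximation is needed.

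In that regime, pass to the variable $\theta=\pi x/2$ and seek a trigonometric polynomial $p(\theta)=\sum_{|j|\le d}z_j\,e^{ij\theta}$ approximating $g(\theta):=e^{-\lambda(2\theta/\pi+1)}$ on the interval $[-\pi\gamma/2,\pi\gamma/2]$. Since $g$ is entire and bounded by $1$ on the real axis, and the target interval is strictly contained in the full period $[-\pi,\pi]$, there is room of length $\pi(1-\gamma)$ on each side in which to construct a $2\pi$-periodic extension $\tilde g$ of $g$ that is analytic in a complex strip of imaginary width proportional to $(1-\gamma)$. Taking $z_j$ to be the Fourier coefficients of $\tilde g$ for $|j|\le d$ yields, by standard Paley--Wiener / Jackson-type decay estimates, a truncation error bounded by $O(e^{-c(1-\gamma)d})$ on the target interval. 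Matching this error to $\delta/4$ and solving for $d$ gives the stated degree $d=2\lceil\log(4/\delta)/(1-\gamma)\rceil$ once the constant $c$ is shown to equal $1/2$ via the explicit construction.

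The main obstacle is the tight coefficient bound $\sum_j|z_j|\le 1$, which is strictly stronger than the sup-norm bound $\|p\|_\infty\le 1$ (since for a trigonometric polynomial one always has $\|p\|_\infty\le\sum_j|z_j|$). A generic truncated Fourier series of an analytic periodic extension gives $\sum_j|z_j|$ growing at least as $1/(1-\gamma)$ from the geometric sum of the exponential Fourier-decay bound, so the extension $\tilde g$ must be constructed so that its Fourier coefficients are already absolutely summable with sum at most $1$. One route is to express $g\cdot\chi$ (for a smooth cutoff $\chi$ supported in $[-\pi/2,\pi/2]$) as a convolution against a specifically normalized positive analytic kernel, so that the resulting extension has Fourier coefficients inheriting an $\ell^1$ bound from the unit $L^1$ norm of the kernel combined with $\|g\|_\infty\le 1$. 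Verifying that an explicit construction of this form simultaneously delivers the claimed exponential decay and the unit $\ell^1$ bound, with the exact constants matching the stated degree, is the most delicate aspect of the proof.
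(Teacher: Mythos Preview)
Your proposal takes a genuinely different route from the paper and, as you yourself flag, leaves the crucial step unresolved. The paper does not build a periodic extension or truncate a Fourier series at all. Instead it factors the problem through an \emph{ordinary} polynomial approximation: Lemma~3.4 of Sachdeva--Vishnoi gives a polynomial $p(x)=\sum_{k=0}^K a_k x^k$ with $\sum_k|a_k|\le 1$ approximating $e^{-\lambda(x+1)}$ to error $\delta/4$ on $[-1,1]$; then Lemma~37 of van~Apeldoorn et~al.\ is invoked as a black box that converts any such polynomial (with bounded $\ell^1$ coefficient norm) into a trigonometric polynomial of degree $d=2\lceil\log(4/\delta)/(1-\gamma)\rceil$ approximating the same function on $[-\gamma,\gamma]$, and that lemma outputs the bound $\sum_j|z_j|\le\sum_k|a_k|\le 1$ directly. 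No case split on $\lambda$ is needed, and the exact constants come for free from the cited lemmas.

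The gap in your approach is precisely the point you identify as ``most delicate'': you have not produced a construction that achieves $\sum_j|z_j|\le 1$. Your own estimate shows that a generic truncated Fourier series of an analytic periodic extension gives $\ell^1$ norm of order $1/(1-\gamma)$, and the convolution-kernel idea you sketch is not worked out---in particular, convolving against a positive kernel of unit $L^1$ norm controls the $L^\infty$ norm of the resulting function, not the $\ell^1$ norm of its Fourier coefficients, so the mechanism you propose does not obviously deliver the required bound. The paper's route sidesteps this entirely because the $\ell^1$ bound is inherited from the \emph{polynomial} coefficients via the van~Apeldoorn et~al.\ conversion, rather than being extracted from Fourier decay. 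If you want to complete your direct approach you would need either a sharper analytic-extension construction or to rediscover the polynomial-to-trigonometric mechanism that Lemma~37 encapsulates.
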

\begin{proof}
    By Lemma 3.4 in~\cite{sachdeva2013approximation}, for some integer $K$ there exists a degree-$K$ polynomial ${p(x)\coloneqq\sum_{k=0}^Ka_kx^k}$ such that
    \begin{equation}
        \max_{x\in[-1,1]}\left|p(x)-e^{-\lambda(x+1)}\right|\le\frac{\delta}{4},
    \end{equation}
    for $\sum_{k}|a_k|\le1$.
    Given this, Lemma 37 in~\cite{vanApeldoorn2020quantumsdpsolvers} immediately implies our desired result.
\end{proof}
\begin{remark}
    By way of explanation, Lemma 37 in~\cite{vanApeldoorn2020quantumsdpsolvers} states that if there exists a polynomial approximating $e^{-\lambda(x+1)}$ with pointwise error $\delta/4$, then there exists a trigonometric polynomial approximating $e^{-\lambda(x+1)}$ with pointwise error $\delta$.
    The degree $d$ of the trigonometric polynomial does not directly depend on the degree $K$ of the polynomial, but instead just depends on the norm of its coefficients, the error $\delta$, and the size $\gamma$ of the window.
    The norm of the coefficients of the trigonometric polynomial is upper bounded by the norm of the coefficients of the polynomial, which in our case is upper bounded by $1$.
    Substituting the corresponding symbols into Lemma 37 in~\cite{vanApeldoorn2020quantumsdpsolvers} is all that is required.
\end{remark}
\begin{remark}
    The proofs of Lemma 3.4 in~\cite{sachdeva2013approximation} and Lemma 37 in~\cite{vanApeldoorn2020quantumsdpsolvers} are both constructive, yielding an efficient algorithm for constructing the expansion coefficients $z_j$.
\end{remark}

We apply \cref{gibbs_trig_poly_lemma} to approximating the Gibbs state by letting $x=\frac{\gamma E}{\|H\|}$ and $\lambda=\frac{\beta\|H\|}{\gamma}$ where $\gamma\in(0,1)$ is a free parameter for now.
With these choices,
\begin{equation}
    e^{-\lambda(x+1)}=e^{-\beta E}e^{-\beta\|H\|/\gamma},
\end{equation}
where the second factor on the right-hand side may be interpreted as a temperature dependent rescaling.
Substituting the above into \eqref{gibbs_trig_poly} yields
\begin{equation}
    \max_{\frac{\gamma E}{\|H\|}\in[-\gamma,\gamma]}\left|\sum_{j=-d}^dz_je^{\frac{ij\pi\gamma E}{2\|H\|}}-e^{-\beta E}e^{-\beta\|H\|/\gamma}\right|\le\delta,
\end{equation}
which simplifies to
\begin{equation}
    \max_{E\in[-\|H\|,\|H\|]}\left|\sum_{j=-d}^d\hat z_je^{ijE\,dt}-e^{-\beta E}\right|\le e^{\beta\|H\|/\gamma}\delta 
\end{equation}
with
\begin{equation}
\label{expansion_params}
\begin{split}
    &dt \coloneqq \frac{\pi\gamma}{2\|H\|}, \\
    &\hat z_j \coloneqq e^{\beta\|H\|/\gamma}z_j.
\end{split}
\end{equation}
This implies that
\begin{equation}
    \left\|\sum_{j=-d}^d\hat z_je^{ijH\,dt}-e^{-\beta H}\right\|\le\delta e^{\frac{\beta\|H\|}{\gamma}},
\end{equation}
since all of the operators involved are diagonal in the eigenbasis of $H$.
Note that for a given $d$ there is a trade-off between $\delta$ and $1-\gamma$, since they must satisfy $d=2\left\lceil\frac{1}{1-\gamma}\log\left(\frac{4}{\delta}\right)\right\rceil$.
Wrapping everything up, in order to achieve an error $\hat\delta$ for a trigonometric polynomial (i.e., Laurent polynomial in $U=e^{iH\,dt}$) approximation of $e^{-\beta H}$, we must choose $dt$ and coefficients as in \eqref{expansion_params}, and degree
\begin{equation}
    d=2\left\lceil\frac{1}{1-\gamma}\log\left(\frac{4e^{\frac{\beta\|H\|}{\gamma}}}{\hat\delta}\right)\right\rceil=2\left\lceil\frac{1}{1-\gamma}\log\left(\frac{4}{\hat\delta}\right) + \frac{\beta\|H\|}{\gamma}\right\rceil.
\end{equation}
Equivalently, the error as a function of $d$ and the other problem parameters is given by
\begin{equation}
    \hat\delta=4\exp\left(\frac{\beta\|H\|}{\gamma}-\frac{d}{2}(1-\gamma)\right).
\end{equation}

\end{widetext}

\end{document}